\documentclass[USenglish,11pt,a4paper]{article}
\usepackage[a4paper,margin=1in]{geometry}
\usepackage[utf8]{inputenc}
\usepackage{amsmath,amsfonts,amssymb}
\usepackage{nicefrac}

\usepackage{amsthm} %
\usepackage{hyperref} %
\usepackage[capitalise,nameinlink,noabbrev]{cleveref}

\usepackage{graphicx}
\usepackage{subcaption}
\graphicspath{{img/}}

\usepackage{mathtools}

\usepackage{refcount}

\DeclareMathOperator{\dist}{dist}

\newcommand{\modplus}{\boxplus}
\newcommand{\suc}[1]{#1 \modplus 1}
\newcommand{\prob}[1]{\textsc{#1}}
\newcommand{\shp}[1]{\pi\left(#1\right)}

\newtheorem{theorem}{Theorem}
\newtheorem{lemma}[theorem]{Lemma}
\newtheorem{definition}[theorem]{Definition}
\newtheorem{claim}[theorem]{Claim}

\makeatletter
\newenvironment{claimproof}[1][\proofname]{
  
  \pushQED{\qed}%
  \normalfont \topsep6\p@\@plus6\p@\relax
  \trivlist
  \item[\hskip\labelsep\itshape
    #1\@addpunct{.}]\ignorespaces
}{%
  \popQED\endtrivlist%
}
\makeatother

\usepackage{enumerate}

\begin{document}

\title{W[1]-Hardness of the $k$-Center Problem\\Parameterized by the Skeleton Dimension}
\author{Johannes Blum\\University of Konstanz, Germany\\\texttt{johannes.blum@uni-konstanz.de}}
\date{}
\maketitle

\begin{abstract}
In the \prob{$k$-Center} problem, we are given a graph $G=(V,E)$ with positive edge weights and an integer $k$ and the goal is to select $k$ center vertices $C \subseteq V$ such that the maximum distance from any vertex to the closest center vertex is minimized. On general graphs, the problem is NP-hard and cannot be approximated within a factor less than $2$.

Typical applications of the \prob{$k$-Center} problem can be found in logistics or urban planning and hence, it is natural to study the problem on transportation networks. Such networks are often characterized as graphs that are (almost) planar or have low doubling dimension, highway dimension or skeleton dimension. It was shown by Feldmann and Marx that \prob{$k$-Center} is W[1]-hard on planar graphs of constant doubling dimension when parameterized by the number of centers $k$, the highway dimension $hd$ and the pathwidth $pw$~\cite{Fel20}.
We extend their result and show that even if we additionally parameterize by the skeleton dimension $\kappa$, the \prob{$k$-Center} problem remains W[1]-hard. Moreover, we prove that under the Exponential Time Hypothesis there is no exact algorithm for \prob{$k$-Center} that has runtime $f(k,hd,pw,\kappa) \cdot \vert V \vert^{o(pw + \kappa + \sqrt{k+hd})}$ for any computable function $f$.
\end{abstract}

\section{Introduction}
The \prob{$k$-Center} problem consists of the following task: Given a graph $G=(V,E)$ with positive edge weights $\ell: E \rightarrow \mathbb{Q}^+$ and some $k \in \mathbb{N}$, choose $k$ center vertices $C \subseteq V$ that minimize the maximum distance from any vertex of the graph to the closest center. Formally, if the shortest path distances in $G$ are given by $\dist\colon V^2 \rightarrow \mathbb{Q}^+$ and $B_r(v) = \{w \in V \mid \dist(v,w) \leq r\}$ denotes the ball of radius $r$ around $v$, we aim for a solution $C \subseteq V$ of size $\vert C \vert \leq k$ that has minimum cost, which is the smallest radius $r \geq 0$ such that $V = \bigcup_{v \in C} B_r(v)$.

On general graphs, the \prob{$k$-Center} problem is NP-complete~\cite{Vazirani2001}, as well as on planar graphs~\cite{Plesnik1980} and geometric graphs using $L_1$-, $L_2$- or $L_\infty$-distances~\cite{Feder1988}. On the positive side there is a general $2$-approximation algorithm by Hochbaum and Shmoys~\cite{Hochbaum1986}, i.e.\ an efficient algorithm that computes a solution which deviates from the optimum at most by a factor of $2$. This factor is tight, as for any $\epsilon > 0$, it is NP-hard to compute a $(2-\epsilon)$-approximation, even when considering planar graphs~\cite{Plesnik1980} or graphs with $L_1$- or $L_\infty$-distances~\cite{Feder1988}.

However, common applications of the \prob{$k$-Center} problem arise in domains like logistics or urban planning. For instance, one might want to place a limited number of warehouses, hospitals or police stations on a map such that the distance from any point to the closest facility is minimized. Hence, it is natural to study the problem on transportation networks. Common characterizations of such networks are graphs that are planar or have low doubling dimension,  highway dimension, or skeleton dimension. For formal definitions of these parameters, see \cref{sec:preliminaries}. Usually, it is assumed that in transportation networks the mentioned parameters are bounded by $\mathcal{O}(\mathrm{polylog} \vert V \vert)$ or $\mathcal{O}(\sqrt{\vert V \vert})$. It was shown that on graphs of maximum degree $\Delta$ and highway dimension $hd$, the skeleton dimension is at most $(\Delta + 1) \cdot hd$ ~\cite{kos16}.
The relationship between highway dimension $hd$ and skeleton dimension $\kappa$ was also evaluated experimentally on several real-world road networks and it turned out that $\kappa \ll hd$ \cite{blu18}.
Moreover, it was conjectured that on road networks the skeleton dimension is a constant whereas the highway dimension grows faster than $\mathcal{O}(\mathrm{polylog} \vert V \vert)$.

Still, a low highway dimension or skeleton dimension does not suffice to overcome the general inapproximability bound of \prob{$k$-Center}. In particular, it was shown that for any $\epsilon > 0$, there is no $(2-\epsilon)$-approximation algorithm for graphs of highway dimension $hd \in \mathcal{O}(\log^2 \vert V \vert)$~\cite{DBLP:journals/algorithmica/Feldmann19} or skeleton dimension $\kappa \in \mathcal{O}(\log^2 \vert V \vert)$~\cite{DBLP:conf/iwpec/000119}, unless P=NP.

Apart from approximation, a common way of dealing with NP-hard problems is the use of fixed-parameter algorithms. Such an algorithm computes an exact solution in time $f(p) \cdot n^{\mathcal{O}(1)}$, where $f$ is a computable function and $p$ a parameter of the problem instance which is independent of the problem size $n$. 
In other words, if a problem admits a fixed-parameter algorithm, the complexity of the problem can be captured through some parameter $p$. If this is the case, we call the problem fixed-parameter tractable (FPT).
A natural parameter for \prob{$k$-Center} is the number of center vertices $k$. However, it was shown that in general, \prob{$k$-Center} is W[2]-hard for parameter $k$, and hence it is not fixed-parameter tractable unless W[2] = FPT~\cite{Demaine2005}.
Feldmann and Marx studied the fixed-parameter tractability of \prob{$k$-Center} on transportation networks~\cite{Fel20}. They showed that \prob{$k$-Center} is W[1]-hard even if the input is restricted to planar graphs of constant doubling dimension and the parameter is a combination of $k$, the highway dimension $hd$ and the pathwidth $pw$. Moreover, they proved that under the Exponential Time Hypothesis (ETH) there is no exact algorithm with runtime $f(k, pw, hd) \cdot \vert V \vert^{o(pw + \sqrt{k + hd})}$.
In the present paper we extend their result and show that one can additionally parameterize by the skeleton dimension $\kappa$ without affecting W[1]-hardness. Formally, we show the following theorem.

\begin{theorem}\label{thm:main_result}
On planar graphs of constant doubling dimension, the \prob{$k$-Center} problem is $W[1]$-hard for the combined parameter $(k,pw,hd,\kappa)$ where $pw$ is the pathwidth, $hd$ the highway dimension and $\kappa$ the skeleton dimension of the input graph.
Assuming ETH there is no $f(k,pw,hd,\kappa) \cdot \vert V \vert^{o(pw+\kappa+\sqrt{k+hd})}$ time algorithm\footnote{Here $o(pw+\kappa+\sqrt{k+hd})$ stands for $g(pw+\kappa+\sqrt{k+hd})$ where $g$ is a function with $g(x) \in o(x)$} for any computable function $f$.
\end{theorem}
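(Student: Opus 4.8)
The plan is to reuse, with one essential modification, the reduction of Feldmann and Marx \cite{Fel20}. Their reduction starts from a $W[1]$-hard variant of the \prob{Grid Tiling} problem and produces, in polynomial time, an edge-weighted planar graph $G$ of constant doubling dimension, a number of centers $k$, and a radius $\rho$, such that $G$ admits a $k$-center solution of cost at most $\rho$ if and only if the input is a yes-instance, and such that $pw(G)$, $hd(G)$ and $k$ are small enough to establish, under ETH, that there is no $f(k,pw,hd)\cdot |V|^{o(pw+\sqrt{k+hd})}$-time algorithm. The one parameter their construction is not designed to control is the skeleton dimension: the coordinate-encoding ``connector'' paths that they attach to the cell gadgets typically fan out from common high-degree vertices and stay long far from any source, which can make $\kappa$ as large as $\Omega(|V|)$. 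Consequently, the whole proof reduces to producing a variant $G'$ of $G$ which, in addition to all of the above, satisfies $\kappa(G')=\mathcal{O}(pw+\sqrt{k+hd})$, while keeping planarity, the constant doubling dimension, and the equivalence with the \prob{Grid Tiling} instance; given such a $G'$, both statements of \cref{thm:main_result} follow.

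To build $G'$ I would exploit the bound of \cite{kos16} that a graph of maximum degree $\Delta$ and highway dimension $hd$ has $\kappa\le(\Delta+1)\cdot hd$: it therefore suffices to make the Feldmann--Marx graph bounded-degree without harming its other parameters. Concretely, one recalls the relevant part of their construction --- a grid of cell gadgets joined by coordinate-encoding connector paths, with $k$ and $\rho$ calibrated so that an optimal solution places one center per cell and covers all vertices within $\rho$ exactly when the chosen pairs are consistent across adjacent cells --- and then replaces every high-degree vertex by a small planar gadget of degree $3$ (a short path or shallow tree), rerouting the incident connectors through its distinct attachment points and giving the new edges tiny weights. Choosing these weights small enough that all shortest-path distances, and hence the validity of the coordinate encoding and of the consistency test, change only by a negligible additive amount ensures that yes-instances are preserved; and since the gadgets are planar and embedded locally, $G'$ stays planar.

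The remaining, and I expect main, difficulty is to verify that this degree reduction does not spoil the two delicate quantities: keeping the highway dimension small and, above all, keeping a \emph{constant} doubling dimension while inserting many new almost-zero-weight vertices requires a careful choice of the gadget shapes and of the new weights, together with a re-run of the Feldmann--Marx analysis of $hd$ and of the doubling dimension on $G'$; alternatively, one can bypass \cite{kos16} altogether and bound $\kappa(G')$ by hand, analysing for each source $s$ and each radius $r$ which branches of the shortest-path tree $T_s$ survive the skeleton's pruning rule and arguing that only $\mathcal{O}(pw+\sqrt{k+hd})$ of them do. Once $G'$ is established with $k$, $pw$, $hd$ and $\kappa$ all within this bound, the map from a \prob{Grid Tiling} instance to $(G',k,\rho)$ is a parameterized reduction for the combined parameter $(k,pw,hd,\kappa)$, proving $W[1]$-hardness; and composing this polynomial-time reduction with the Feldmann--Marx ETH lower bound --- where, since $\kappa=\mathcal{O}(pw+\sqrt{k+hd})$ on our instances, adding $\kappa$ to the exponent does not weaken the bound --- rules out, under ETH, any algorithm of running time $f(k,pw,hd,\kappa)\cdot |V|^{o(pw+\kappa+\sqrt{k+hd})}$.
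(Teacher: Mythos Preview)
Your high-level plan---produce a bounded-degree variant of the Feldmann--Marx instance and then bound $\kappa$---is in the same spirit as the paper's, but your concrete execution has a genuine gap for the ETH statement. The bound $\kappa\le(\Delta+1)\cdot hd$ with $\Delta=\mathcal{O}(1)$ and $hd=\mathcal{O}(\chi^2)$ only yields $\kappa=\mathcal{O}(\chi^2)$, not the $\kappa=\mathcal{O}(\chi)=\mathcal{O}(pw+\sqrt{k+hd})$ you claim. With $\kappa=\Theta(\chi^2)$ one has $pw+\kappa+\sqrt{k+hd}=\Theta(\chi^2)$, so a hypothetical $|V|^{o(pw+\kappa+\sqrt{k+hd})}$-time algorithm for \prob{$k$-Center} would solve $\prob{GT}_\leq$ only in time $n^{o(\chi^2)}$, which does not contradict the $n^{o(\chi)}$ ETH lower bound for $\prob{GT}_\leq$. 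Thus your final sentence---``since $\kappa=\mathcal{O}(pw+\sqrt{k+hd})$ on our instances, adding $\kappa$ to the exponent does not weaken the bound''---is precisely the step that fails. The paper hits the same wall and says so explicitly: applying the $(\Delta+1)\cdot hd$ bound to its own graph gives only $\kappa=\mathcal{O}(\chi^2)$, and a separate, construction-specific argument is required to reach $\kappa=\mathcal{O}(\chi)$.

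This is also why the paper does \emph{not} perform a generic degree reduction with near-zero auxiliary weights. It redesigns the gadget: the cycle gets length $\approx 4\cdot 2^{n+2}$, and for $h\in\{1,3\}$ the vertex $x^h_{i,j}$ is attached via a single edge to an auxiliary path $U^h_{i,j}$ whose edge lengths grow geometrically as $2^1,2^2,\ldots,2^n$, with ``portal'' edges of length roughly $2^b$ linking the $b$-th node of $U^h_{i,j}$ back to the cycle. The exponential spacing is what drives the skeleton bound: from a source at level $b$, every portal edge at level $\beta\le b-2$ is pruned from the skeleton because the distance to it already exceeds twice the length of the subtree hanging below it, so only $\mathcal{O}(1)$ portal edges per gadget survive in any skeleton; combined with a separate argument that any cut meets only $\mathcal{O}(\chi)$ gadgets, this gives $\kappa=\mathcal{O}(\chi)$. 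A path of $\Theta(n^2)$ tiny-weight vertices replacing $x^h_{i,j}$ has no such self-pruning structure, and there is no evident reason your $G'$ should satisfy $\kappa=\mathcal{O}(\chi)$; your fallback of ``bounding $\kappa(G')$ by hand'' would have to supply a comparable mechanism, which is the actual content of the proof. Your worry about the doubling dimension under the tiny-weight reduction is legitimate as well, but it is secondary to the $\kappa$ gap.
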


The reduction of Feldmann and Marx produces a graph where the maximum degree $\Delta$ can be quadratic in the input size. As we have $\Delta \leq \kappa$, it does not imply any hardness for the skeleton dimension. Our new construction yields a graph of constant maximum degree, which enables us to bound the skeleton dimension as well as the highway dimension and the pathwidth.

The results reported by Blum and Storandt~\cite{blu18} indicate that in real-world road networks, the skeleton dimension $\kappa$ is significantly smaller than the highway dimension, which motivates the use of $\kappa$ as a parameter.  
Note that in general, the parameters $pw$, $hd$ and $\kappa$ are incomparable~\cite{DBLP:conf/iwpec/000119}. Still, our main result shows that combining all these parameters and the number of centers $k$ does not allow a fixed-parameter algorithm unless FPT=W[1]. 
However, for the combined parameters $(k,hd)$ \cite{DBLP:journals/algorithmica/Feldmann19} and $(k,\kappa)$ \cite{Fel20}, the existence of a fixed-parameter approximation algorithm was shown, i.e.\ an approximation algorithm with runtime $f(p) \cdot n^{\mathcal{O}(1)}$ for parameter $p$. \Cref{thm:main_result} indicates that apart from approximation there is not much hope for efficient algorithms.

\section{Preliminaries}\label{sec:preliminaries}
For $n \in \mathbb{N}$, let $[n] = \{1, \dots, n\}$. Addition modulo $4$ is denoted by $\modplus$. For $(a,b), (a',b') \in \mathbb{N}$ let $(a,b) \leq (a',b')$ iff $a < a'$ or $a = a'$ and $b \leq b'$.

In a graph $G=(V,E)$ we denote the shortest $s$-$t$ path by $\pi(s,t)$ and the length of a path $P$ by $\vert P \vert$. The concatenation of two paths $P$ and $P'$ is denoted by $P \circ P'$.

A graph $G$ is planar if it can be embedded into the plane without crossing edges, and $d$-doubling if for any $r > 0$, any ball $B_{2r}(v)$ of radius $2r$ in $G$ is contained in the union of $d$ balls of radius $r$. If $d$ is the smallest integer such that $G$ is $d$-doubling, the graph $G$ has doubling dimension $\log_2 d$.

For the highway dimension several slightly different definitions can be found in the literature~\cite{abr10,abr11,abr16}. Here we use the one given in \cite{abr10}.

\begin{definition}
The highway dimension of a graph $G$ is the smallest integer $hd$ such that for any radius $r$ and any vertex $v$ there is a hitting set $S \subseteq B_{4r}(v)$ of size $hd$ for the set of all shortest paths $\pi$ satisfying $\vert \pi \vert > r$ and $\pi \subseteq B_{4r}(v)$.
\end{definition}

To define the skeleton dimension, which was introduced in \cite{kos16},
we need to consider the geometric realization $\tilde G$ of a graph $G$. Intuitively, $\tilde G$ is a continuous version of $G$ where every edge is subdivided into infinitely many infinitely short edges. For a vertex $s \in V$, let $T_s$ be the shortest path tree of $s$. We assume that in $G$ every shortest path is unique, which can be achieved, e.g., by slightly perturbing the edge weights, and it follows that $T_s$ is also unique. The skeleton $T^*_s$ is defined as the subtree of $\tilde T_s$ induced by all $v \in \tilde V$, for which there is some vertex $w$ such that $v$ is contained in $\pi(s,w)$ and moreover, we have $\dist(s,v) \leq 2 \cdot \dist(v,w)$.

\begin{definition}
For a skeleton $T^*_s = (V^*, E^*)$ and a radius $r>0$, let $\mathrm{Cut}_s^r = \{v \in V^* \mid \dist(s,v) = r \}$.
The skeleton dimension of a graph $G$ is $\kappa = \max_{s, r} \vert \mathrm{Cut}_s^r \vert$.
\end{definition}

We conclude this section with a definition of the pathwidth.

\begin{definition}
A path decomposition of a graph $G = (V,E)$ is a sequence $(X_1, \dots, X_r)$ where every $X_i$ (also called \emph{bag}) is a subset of $V$ and the following properties are satisfied:
\begin{romanenumerate}
\item $\bigcup_{i=1}^r X_i = V$,
\item for every edge $\{u,v\} \in E$ there is a bag $X_i$ containing both $u$ and $v$, and
\item for every three indices $i \leq j \leq$ we have $X_i \cap X_k \subseteq X_j$.
\end{romanenumerate}
The width of a path decomposition is the size of the largest bag minus one, i.e.\ $\max_{i=1}^r \left( \vert X_i \vert - 1 \right)$. The pathwidth $pw$ of a graph $G = (V,E)$ is defined as the minimum width of all path decompositions of $G$.
\end{definition}

\section{The Reduction}
Following the idea of Feldmann and Marx~\cite{Fel20}, who showed that on planar graphs of constant doubling dimension, \prob{$k$-Center} is $W[1]$-hard for parameter $(k, pw, hd)$, we present a reduction from the \prob{Grid Tiling with Inequality} ($\prob{GT}_\leq$) problem. This problem asks the following question:
Given $\chi^2$ sets $S_{i,j} \subseteq [n]^2$ of pairs of integers, where $(i,j) \in [\chi]^2$, is it possible to choose one pair $s_{i,j} \in S_{i,j}$ from every set, such that
\begin{itemize}
		\item if $s_{i,j} = (a,b)$ and $s_{i+1,j} = (a',b')$ we have $a \leq a'$, and
		\item if $s_{i,j} = (a,b)$ and $s_{i,j+1} = (a',b')$ we have $b \leq b'$.
\end{itemize}
It is known that the $\prob{GT}_\leq$ problem is $W[1]$-hard for parameter $\chi$ and, unless the Exponential Time Hypothesis (ETH) fails, it has no $f(\chi) \cdot n^{o(\chi)}$ time algorithm for any computable $f$~\cite{Cygan2015}.

\subsection{The Reduction of Feldmann and Marx}
In \cite{Fel20} the following graph $H_{\mathcal I}$ is constructed from an instance $\mathcal{I}$ of $\prob{GT}_\leq$. For any of the $\chi^2$ sets $S_{i,j}$, the graph $H_{\mathcal I}$ contains a gadget $H_{i,j}$ that consists of a cycle $O_{i,j} = v_1 v_2 \dots v_{16n^2+4} v_1$ and five additional vertices $x^1_{i,j},x^2_{i,j},x^3_{i,j},x^4_{i,j}$ and $y_{i,j}$. Every edge contained in some cycle $O_{i,j}$ has unit length and every vertex $y_{i,j}$ is connected to $O_{i,j}$ via edges to $v_1, v_{4n^2+2}, v_{8n^2+3}$ and $v_{12n^2+4}$, which all have length $2n^2+1$. Moreover, for every pair $(a,b) \in S_{i,j}$ and $\tau = (a-1) \cdot n + b$, the gadget $H_{i,j}$ contains the four edges
\begin{itemize}
	\item $\{x^1_{i,j},v_\tau\}$ of length $2n^2-\frac{a}{n+1}$,
	\item $\{x^2_{i,j},v_{\tau+4n^2+1}\}$ of length $2n^2+\frac{b}{n+1}-1$,
	\item $\{x^3_{i,j},v_{\tau+8n^2+2}\}$ of length $2n^2+\frac{a}{n+1}-1$, and
	\item $\{x^4_{i,j},v_{\tau+12n^2+3}\}$ of length $2n^2-\frac{b}{n+1}$.
\end{itemize}

Finally, the individual gadgets are connected in a grid-like fashion, which means that there is a path from $x^2_{i,j}$ to $x^4_{i,j+1}$ and from $x^3_{i,j}$ to $x^1_{i+1,j}$. Each of these paths has length $1$ and consists of $n+2$ edges of length $\frac{1}{n+2}$.

Feldmann and Marx showed that the given $\prob{GT}_\leq$ instance $\mathcal{I}$ has a solution if and only if the \prob{$k$-Center} problem in the graph $H_{\mathcal I}$ has a solution of cost $2n^2$ using $k = 5 \chi^2$ centers. Moreover, the graph $H_{\mathcal I}$ is planar and has doubling dimension $\mathcal{O}(1)$, highway dimension $\mathcal{O}(\chi^2)$ and pathwidth $\mathcal{O}(\chi)$.
Observe that the degree of any vertex $x^h_{i,j}$ is $\vert S_{i,j} \vert$. This means that the skeleton dimension of $H_{\mathcal{I}}$ might be as large as $\Omega(n^2)$, as the maximum degree of $H_{\mathcal I}$ is a lower bound on its skeleton dimension. We show now how to construct a graph $G_\mathcal{I}$ that resembles $H_\mathcal{I}$, but has skeleton dimension $\mathcal{O}(\chi)$ and fulfills the other mentioned properties. 

\subsection{Our Construction}
We assume that in the given $\prob{GT}_\leq$-instance, for all $(i,j) \in [\chi]^2$ and every $b \in [n]$, there is some $a \in [n]$ such that $(a,b) \in S_{i,j}$. This is a valid assumption, as from an instance $\mathcal I$ of ordinary $\prob{GT}_\leq$, we can construct the following instance $\mathcal I'$. For $i \in [\chi-1]$ and $j \in [\chi]$ we add the pairs $\left\{ (n+\chi-i,b) \mid b \in [n] \right\}$ to $S_{i,j}$. Moreover, we add the pairs $\left\{(0,b) \mid b \in [n] \right\}$ to every $S_{\chi,j}$. 
Clearly, every solution for $\mathcal I$ is also a solution for $\mathcal I'$. Consider now a solution for $\mathcal I'$. For $(i,j) \in [\chi-1] \times [\chi]$ we cannot choose a dummy pair $s_{i,j} = (n+\chi-i,b)$, as there is no $(a',b') \in S_{i+1,j}$ such that $a' \geq n+\chi-i$. Moreover, it is not possible to choose $s_{\chi,j} = (0,b')$ as $S_{\chi-1,j}$ contains no pair $(a,b)$ satisfying $a \leq 0$. Hence, $\mathcal I$ has a solution if and only if $\mathcal I'$ has a solution. 

Given a $\prob{GT}_\leq$-instance $\mathcal{I}$ we construct the following graph $G_{\mathcal{I}}$ (cf. \Cref{fig:reduction}). Like in~\cite{Fel20}, we create a gadget $G_{i,j}$ for every set $S_{i,j}$. %
Any $G_{i,j}$ contains a cycle $O_{i,j}$, which initially consists of four edges that have length $2^{n+2} + \nicefrac{1}{n}$. Denote the four vertices of the cycle $O_{i,j}$ by $z^1_{i,j}, \dots, z^4_{i,j}$ and for $h \in [4]$ let $O^h_{i,j} = \shp{z^h_{i,j}, z^{\suc h}_{i,j}}$.  Now, for any pair $(a,b) \in S_{i,j}$ and any $h \in [4]$ we insert a vertex $v^h_{(a,b)}$ into the path $O^h_{i,j}$ and place it such that its distance to $z^h_{i,j}$ is
\[ d_{(a,b)} = 2^{b} - 1 + \frac{a}{n}. \]
It follows that the distance between $v^h_{(a,b)}$ and $v^{\suc h}_{(a,b)}$ is $2^{n+2} + \nicefrac{1}{n}$. Moreover, for $(a',b') \leq (a,b)$, the distance from $v^h_{(a',b')}$ to $v^h_{(a,b)}$ is $2^b - 2^{b'} + \nicefrac{(a-a')}{n}$.

\begin{figure}[t]
\begin{subfigure}[c]{0.5\textwidth}
	\centering
	\includegraphics{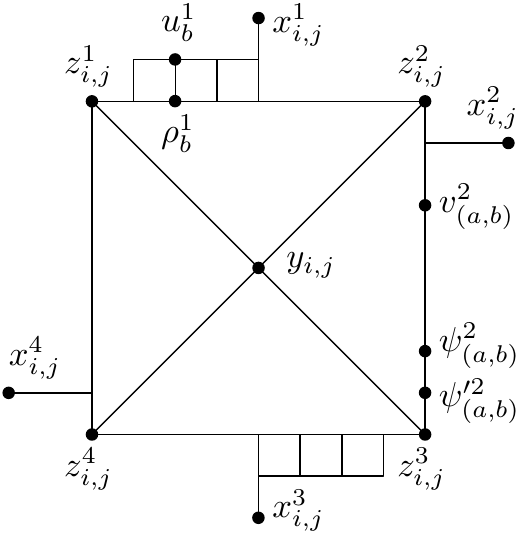}
	\caption{A gadget $G_{i,j}$.}\label{fig:gadget}
\end{subfigure}
\hfill
\begin{subfigure}[c]{0.5\textwidth}
	\centering
	\includegraphics{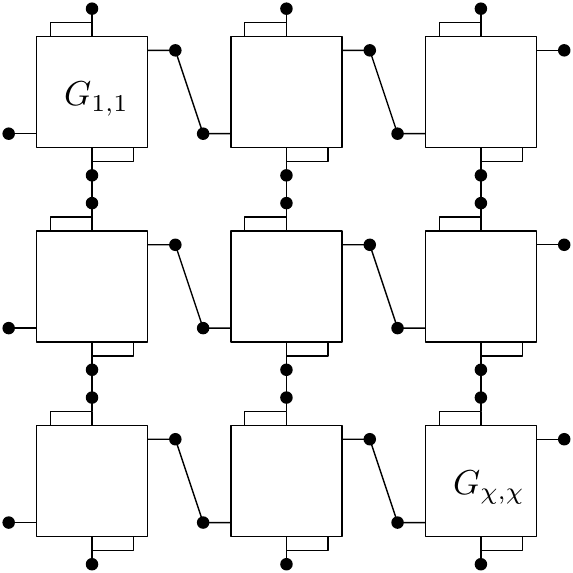}
	\caption{The graph $G_{\mathcal{I}}$.}\label{fig:gadget}
\end{subfigure}
\caption{A single gadget $G_{i,j}$ and the whole graph $G_{\mathcal{I}}$.}
\label{fig:reduction}
\end{figure}

Additionally, for any pair $(a,b) \in S_{i,j}$ and any $h \in [4]$, we insert two vertices $\psi^{h}_{(a,b)}$ and $\psi'^{h}_{(a,b)}$ into the path $O^h_{i,j}$ such that their distance from $v^h_{(a,b)}$ is $2^{n+1}$ and $2^{n+1} + \nicefrac{1}{n}$, respectively. 
This implies that
$\dist\left(\psi'^{h}_{(a,b)}, v^{\suc h}_{(a,b)}\right)= 2^{n+1}$ and $\dist\left(\psi^{h}_{(a,b)}, v^{\suc h}_{(a,b)}\right)= 2^{n+1} + \nicefrac{1}{n}$.

Any gadget $G_{i,j}$ also contains a central vertex $y_{i,j}$ that is connected to each $z^h_{i,j}$ through an edge of length $2^{n+1}+1$. Finally, we add four vertices $x^1_{i,j}, \dots, x^4_{i,j}$ to every gadget $G_{i,j}$, through which we will connect the individual gadgets. 
For $(a,b) \in S_{i,j}$ and $h \in [4]$ denote the distance between $v^h_{(a,b)}$ and $x^h_{i,j}$ by $d^h_{(a,b)}$.
The idea of our reduction is that we attach every $x^h_{i,j}$ to the cycle $O_{i,j}$ such that for every pair $(a,b) \in S_{i,j}$ and $h \in \{1,3\}$, the distance $d^h_{(a,b)}$ reflects the value of $b$, whereas for $h \in \{2,4\}$, the distance $d^h_{(a,b)}$ reflects the value of $a$.

For the latter, we simply add an edge between $x^2_{i,j}$ and the vertex $v^2_{(a^*,b^*)}$ where $(a^*,b^*) = \min S_{i,j}$.\footnote{here the minimum is taken w.r.t\ the lexical order as defined in the preliminaries} The length of this edge is chosen as 
\[
		d^2_{(a^*,b^*)} = 2^n + 1 + d_{(a^*,b^*)} = 2^n + 2^{b^*} + \frac{a^*}{n}.
\]
Similarly we add the edge $\left\{x^4_{i,j}, v^4_{(a^*,b^*)}\right\}$ and set its length to
\[
		d^4_{(a^*,b^*)} = 2^{n+1} - d_{(a^*,b^*)} = 2^{n+1} + 1 - 2^{b^*} - \frac{a^*}{n}.
\]
It follows that for all $(a,b) \in S_{i,j}$ we have
\begin{align*}
		d^2_{(a,b)} & = d^2_{(a^*,b^*)} + d_{(a,b)} - d_{(a^*,b^*)} = 2^n + 2^b + \frac{a}{n} \quad \text{and}\\
		d^4_{(a,b)} & = d^4_{(a^*,b^*)} + d_{(a^*,b^*)} - d_{(a,b)} = 2^{n+1} + 1 - 2^{b} - \frac{a}{n}.
\end{align*}
Attaching $x^1_{i,j}$ and $x^3_{i,j}$ to $G_{i,j}$ is slightly more elaborate. We want to ensure that for any two pairs $(a,b), (a,b') \in S_{i,j}$ that agree on the first component, we have $d^1_{(a,b)} = d^1_{(a,b')}$. For that purpose, we add a path $U^1_{i,j} = u^1_1\,\dots\,u^1_n$ and set the length of every edge $\{u^1_\lambda,u^1_{\lambda+1}\}$ to $2^{\lambda}$. Moreover, we add the edge $\{u^1_n,x^1_{i,j}\}$ of length $2^{n}$. For every $b \in [n]$, consider the vertex $v^1_{(a^*,b)}$ that is furthest from $z^1_{i,j}$. 
We call it also the \emph{$b$-portal} $\rho^1_b$. We attach it to $u^1_b$ through an edge of length $2^{b} - \nicefrac{a^*}{n}$, the so called \emph{$b$-portal edge}. It follows that for $(a,b) \in S_{i,j}$ we have
$
		\dist(v^1_{(a,b)}, u^1_b) = 2^{b} - \nicefrac{a^*}{n} + d_{(a^*,b)} - d_{(a,b)} = 2^{b} - \nicefrac{a}{n}
$
and
$
		\dist(u^1_b, x^1_{i,j}) = \sum_{\lambda = b}^{n} 2^{\lambda} = 2^{n+1} - 2^{b},
$
and hence we have
\[
	d^1_{(a,b)} = 2^{n+1} - \frac{a}{n}.
\]

Similarly we proceed with the vertices contained in $O^3_{i,j}$. We add a path $U^3_{i,j} = u^3_1\,\dots\,u^3_n$, set the length of every edge $\{u^3_\lambda,u^3_{\lambda+1}\}$ to $2^{\lambda}$ and add the edge $\{u^3_n,x^3_{i,j}\}$ of length $2^{n}$. For $b \in [n]$ we use the vertex $v^3_{(a^*,b)}$ that is closest to $z^3_{i,j}$ as the $b$-portal $\rho^3_b$ and attach it to $u^3_b$ trough a portal edge of length $2^{b} - 1 + \nicefrac{a^*}{n}$. It follows that
\[
	d^3_{(a,b)} = 2^{n+1} -1 + \frac{a}{n}.
\]

To complete the construction, we connect the individual gadgets in a grid-like fashion. For $i \in [n-1]$ we connect $x^3_{i,j}$ and $x^1_{i+1,j}$ through a path $P_{i,j}$ of length $1$ that consists of $(n+1)$ edges of length $\nicefrac{1}{(n+1)}$ each. Moreover, for $j \in [n-1]$ we connect $x^2_{i,j}$ and $x^4_{i,j+1}$ through a path $P_{i,j}' = w_1 \dots w_{n}$ where $w_1 = x^4_{i,j+1}$ and $w_{n} = x^4_{i,j}$. We set the length of every edge $\{w_{\lambda+1}, w_{\lambda}\}$ to $2^{\lambda}$ which implies that $\vert P'_{i,j} \vert = 2^n-2$.
The resulting graph $G_{\mathcal I}$ can be constructed in polynomial time from the given $\prob{GT}_\leq$-instance $\mathcal{I}$.

\subsection{Graph Properties}
We now show some basic properties of $G_{\mathcal I}$ that will be useful to prove the correctness of our reduction and to obtain bounds on several graph parameters. 
We first observe that all shortest paths between the cycle $O_{i,j}$ and a path $U^h_{i,j}$ have a certain structure (cf. \cref{fig:sp_structure}).

\begin{lemma}\label{lem:sp-structure}
Let $a,b,b' \in [n]$ and $h \in \{1,3\}$. For $\beta \in [n]$ denote the path $\shp{v_{(a,b)}^h, \rho^h_\beta} \circ \left\{\rho^h_\beta, u^h_\beta\right\} \circ \shp{u^h_\beta, u^h_{b'}}$ by $P_\beta$.
\begin{alphaenumerate}
\item \label{lem:sp-structure:a} If $b' \geq b$, the shortest path from $v_{(a,b)}^h$ to $u_{b'}^h$ is $P_b$.
\item If $b' < b$, the shortest path from $v_{(a,b)}^h$ to $u_{b'}^h$ is $P_{b'}$.
\end{alphaenumerate}
\end{lemma}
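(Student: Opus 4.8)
The plan is to first confine every shortest path from $v^h_{(a,b)}$ to $u^h_{b'}$ to the ``local'' region $R$ consisting of the path $O^h_{i,j}$, the path $U^h_{i,j}$, the vertex $x^h_{i,j}$, the $n$ portal edges $\{\rho^h_\beta,u^h_\beta\}$ and the edge $\{u^h_n,x^h_{i,j}\}$, and then to analyse shortest paths inside $R$, which is simply two paths joined by $n$ pairwise non-crossing rungs. First I would record a cheap upper bound: the walk that follows the sub-path of $O^h_{i,j}$ from $v^h_{(a,b)}$ to the $b$-portal $\rho^h_b=v^h_{(a^*,b)}$ (length $\nicefrac{\vert a-a^*\vert}{n}<1$), then the $b$-portal edge (length $<2^b\le 2^n$), then the sub-path of $U^h_{i,j}$ from $u^h_b$ to $u^h_{b'}$ (length $\vert 2^{\max(b,b')}-2^{\min(b,b')}\vert$) has length strictly below $2^{n+1}$ in each of the cases $b'=b$, $b'>b$, $b'<b$; hence $\dist(v^h_{(a,b)},u^h_{b'})<2^{n+1}$.

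For the confinement, one notes that the only edges leaving $R$ are the edge from $x^h_{i,j}$ into the connecting path towards the neighbouring gadget and, incident to the endpoints $z^h_{i,j},z^{\suc h}_{i,j}$ of $O^h_{i,j}$, the two edges of length $2^{n+1}+1$ to $y_{i,j}$ together with the first edges of the two cycle-paths of $O_{i,j}$ adjacent to $O^h_{i,j}$. A path using any of these edges has to detour through $y_{i,j}$, around the cycle $O_{i,j}$, or into a neighbouring gadget before returning to $u^h_{b'}\in R$; since every subdivision vertex of a cycle-path lies within distance $3\cdot 2^n+\nicefrac1n$ of one of its endpoints (so any such detour either traverses some edge of length $\ge 2^n$ twice, uses an edge of length $2^{n+1}+1$, or runs along a long stretch of a cycle-path), a routine check of edge lengths shows that such a path has length at least $2^{n+1}$, contradicting the bound above, so every shortest path stays in $R$. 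I expect this confinement step to be the main obstacle: conceptually it is just the observation that the ``structural'' edges are exponentially long, but it requires a careful enumeration of the ways out of $R$.

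Inside $R$ the analysis is clean. As $O^h_{i,j}$ and $U^h_{i,j}$ are induced paths joined only by portal edges, a $v^h_{(a,b)}$--$u^h_{b'}$ path in $R$ uses at least one portal edge; it cannot use two, because if it crossed to $U^h_{i,j}$ at $\rho^h_\beta$ and later returned to $O^h_{i,j}$ at $\rho^h_\gamma$, that portion has length at least $1+\vert 2^\gamma-2^\beta\vert+1$ (each portal edge has length $\ge 1$ and the $U^h_{i,j}$-part has length $\vert 2^\gamma-2^\beta\vert$), whereas the sub-path of $O^h_{i,j}$ from $\rho^h_\beta$ to $\rho^h_\gamma$ has length $\vert 2^\gamma-2^\beta+\nicefrac{(a^*(\beta)-a^*(\gamma))}{n}\vert<\vert 2^\gamma-2^\beta\vert+1$, so rerouting strictly shortens the path. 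Hence a shortest path in $R$ uses exactly one portal edge, say the $\beta$-portal, and — because $O^h$ and $U^h$ are paths — equals $P_\beta$. It then remains to determine which $\beta$ minimises $\vert P_\beta\vert$: substituting the vertex positions $d_{(a,b)}=2^b-1+\nicefrac an$ and the chosen portal-edge lengths, comparing $\vert P_\beta\vert$ with $\vert P_b\vert$ (for $b'\ge b$) and with $\vert P_{b'}\vert$ (for $b'<b$) reduces in every case to an inequality of the form $2^{p}-2^{q}\ge\nicefrac{2(a'-a'')}{n}$ with integers $p>q$ and $a',a''\in[n]$, which holds because the left-hand side is at least $2$ while the right-hand side is less than $2$. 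This pins the minimiser to $\beta=b$ when $b'\ge b$ and to $\beta=b'$ when $b'<b$, and as the inequalities are strict this $P_\beta$ is the unique shortest path. The argument for $h=1$ and $h=3$ is identical, the two cases differing only in whether $\rho^h_\beta$ is the $v^h$-vertex with that second coordinate closest to or farthest from $z^h_{i,j}$ — which changes the explicit formulas but not the powers-of-two estimates — so both parts of the lemma follow.
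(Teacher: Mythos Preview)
Your proposal is correct and follows essentially the same route as the paper: both arguments compute the candidate lengths $\lvert P_\beta\rvert$ explicitly and compare them case by case, using that the powers of two dominate the $\nicefrac{a}{n}$--terms. You are more careful than the paper in two places---you explicitly confine the shortest path to the local region $R$ and argue that exactly one portal edge is used---whereas the paper simply asserts that the shortest path must contain some portal edge and proceeds directly to the length comparison, leaving the confinement implicit.
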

\begin{proof}
Any shortest path from $v_{(a,b)}^h$ to $u_{b'}^h$ needs to contain some portal edge $\left\{\rho^h_\beta, u^h_\beta\right\}$. We only prove case~(\ref{lem:sp-structure:a}) for $h=1$, the remaining cases can be shown similarly.

Let $\beta \in [n]$ and let $\rho^1_\beta = v^1_{(\alpha,\beta)}$ be the $\beta$-portal. The path $P_\beta$ has length $\dist(v_{(a,b)}^1, \rho^1_\beta) + \dist(\rho^1_\beta, u^1_\beta) + \dist(u^1_\beta, u^1_{b'}) = \left\vert 2^\beta - 2^b + \nicefrac{(\alpha-a)}{n} \right\vert + 2^\beta - \nicefrac{\alpha}{n} + \left\vert 2^{b'} - 2^\beta \right\vert$.

This means that $\vert P_b \vert = 2^{b'} - \nicefrac{a}{n}$, and for $\beta < b$ we have $\vert P_\beta \vert = 2^{b'} + 2^b - 2^\beta + \nicefrac{(a - 2\alpha)}{n} > 2^{b'} - \nicefrac{a}{n}$. Let now $\beta > b$. If $\beta \leq b'$, we obtain that $\vert P_\beta \vert  = 2^{b'} + 2^{\beta} - 2^b - \nicefrac{\alpha}{n}$ while for $\beta > b'$, we have $\vert P_\beta \vert = 3 \cdot 2^{\beta} - 2^b - 2^{b'} - \nicefrac{a}{n}$, which is both greater than $2^{b'} - \nicefrac{a}{n}$. Hence, $P_b$ is the shortest path from $v_{(a,b)}^h$ to $u_{b'}^h$. 
\end{proof}

\begin{figure}[t]
\begin{subfigure}[c]{0.5\textwidth}
	\centering
	\includegraphics[page=1]{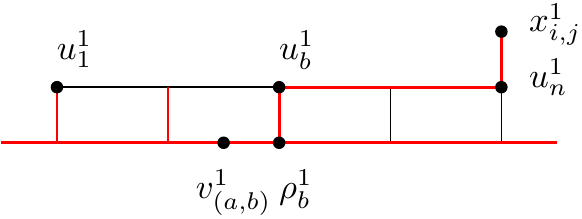}
	\caption{The shortest path tree of a vertex $v^1_{(a,b)}$.}\label{fig:sp_structure_O}
\end{subfigure}
\hfill
\begin{subfigure}[c]{0.5\textwidth}
	\centering
	\includegraphics[page=2]{sp_structure}
	\caption{The shortest path tree of a vertex $u^1_b$.}\label{fig:sp_structure_U}
\end{subfigure}
\caption{Illustration of the shortest path structure as shown in \cref{lem:sp-structure}.}
\label{fig:sp_structure}
\end{figure}
Moreover, it holds that for any vertex $v$ of the graph $G_\mathcal{I}$, there is some central vertex $y_{i,j}$ not too far away. 
\begin{lemma}\label{lem:dist_closest_center}
For every vertex $v \in V$, we have $\min_{(i,j)} \dist(v,y_{i,j}) \leq 2^{n+2} + 2^{n+1}$.
\end{lemma}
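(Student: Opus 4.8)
The plan is to bound, for an arbitrary vertex $v\in V$, its distance to the central vertex $y_{i,j}$ of the gadget it belongs to (or, if $v$ lies on a connecting path $P_{i,j}$ or $P'_{i,j}$, to the central vertex of an adjacent gadget). First I would do a case analysis over where $v$ can live in the construction: (1) $v$ lies on the cycle $O_{i,j}$ (including the inserted vertices $v^h_{(a,b)}$, $\psi^h_{(a,b)}$, $\psi'^h_{(a,b)}$, and the corner vertices $z^h_{i,j}$); (2) $v$ is a portal-path vertex $u^h_\lambda$ or one of the $x^h_{i,j}$; (3) $v$ lies on a connecting path $P_{i,j}$ or $P'_{i,j}$; (4) $v=y_{i,j}$ itself (trivial). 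In each case the goal is to exhibit one explicit path from $v$ to some $y_{i,j}$ of length at most $2^{n+2}+2^{n+1}$.

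The core estimate is case (1). From any point on $O^h_{i,j}$, walk along the cycle to whichever of the two endpoints $z^h_{i,j}$ or $z^{\suc h}_{i,j}$ is closer; since each arc $O^h_{i,j}$ has length $2^{n+2}+\nicefrac1n$, the nearer corner is within $\tfrac12(2^{n+2}+\nicefrac1n)=2^{n+1}+\nicefrac1{2n}$, and then the edge $\{z^h_{i,j},y_{i,j}\}$ of length $2^{n+1}+1$ reaches $y_{i,j}$. This gives a bound of $2^{n+2}+1+\nicefrac1{2n}$, which is below $2^{n+2}+2^{n+1}$ with room to spare; I would just remark that the cruder bound $2^{n+2}+2^{n+1}$ absorbs all the small slack and lower-order terms, so I do not need to be careful with the $\nicefrac1n$ corrections.

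For the remaining cases I would reduce back to case (1) by first reaching the cycle cheaply. A vertex $u^h_\lambda$ on a portal path is joined via a portal edge to some $\rho^h_\beta=v^h_{(a^*,\beta)}$ on the cycle; since the portal-path edges have lengths $2^1,\dots,2^n$ and the portal edges have length $<2^\lambda$, one checks $\dist(u^h_\lambda, O_{i,j})$ is at most a small multiple of $2^n$ — I would compute the worst such value explicitly and confirm that, added to the bound from case (1), it still stays under $2^{n+2}+2^{n+1}$. The vertex $x^h_{i,j}$ is even closer to the cycle. For a vertex on a connecting path $P_{i,j}$ (length $1$) or $P'_{i,j}$ (length $2^n-2$), walk to its nearer endpoint $x^h_{i,j}$, which sits in a gadget, and again fall back to the previous cases.

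The main obstacle is purely bookkeeping: making sure that the \emph{worst} location — which is most plausibly a midpoint of a long connecting path $P'_{i,j}$ (length $2^n-2$), or a vertex like $u^h_1$ at the far end of a portal path — still satisfies the bound once all the connecting-path, portal-path, portal-edge, and cycle-arc contributions are stacked up. I would therefore organize the proof around identifying, for each case, a single explicit route to some $y_{i,j}$, upper-bounding its length by a clean expression (something like $2^{n+2}+2^{n+1}$ minus a positive quantity), and then discarding the slack. No case is individually hard; the only risk is missing a vertex type, so I would enumerate the vertex classes of $G_{\mathcal I}$ up front and check each.
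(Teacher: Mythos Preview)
Your plan is the paper's plan: split by vertex type and in each case exhibit an explicit route to some $y_{i,j}$ through a corner $z^h_{i,j}$. One step as written does not go through, however. The remark that ``$x^h_{i,j}$ is even closer to the cycle'' than the portal-path vertices is false for $h\in\{1,3\}$: there $d^h_{(a,b)}=2^{n+1}+O(1)$, so the nearest cycle vertex to $x^h_{i,j}$ is about $2^{n+1}$ away, and adding your generic case-(1) bound of roughly $2^{n+2}$ overshoots the target $2^{n+2}+2^{n+1}$. The fix---and this is exactly what the paper does---is not to reuse the nearest-corner bound after landing on the cycle, but to note that the specific landing point $v^h_{(a,b)}$ sits at distance $d_{(a,b)}\le 2^n$ from the corner $z^h_{i,j}$. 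Routing $x^h_{i,j}\to v^h_{(a,b)}\to z^h_{i,j}\to y_{i,j}$ gives
\[
d^h_{(a,b)}+d_{(a,b)}+2^{n+1}+1\ \le\ 2^{n+2}+2^n+2,
\]
and then the connecting-path case (adding at most $|P'_{i,j}|=2^n-2$ to reach an $x^h_{i,j}$) lands exactly on $2^{n+2}+2^{n+1}$, which is also where the lemma's bound is tight. With that correction, your outline and the paper's proof coincide.
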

\begin{proof}
Assume first that $v$ is contained in some gadget $G_{i,j}$. If $v$ is contained in the cycle $O_{i,j}$, the distance to the closest vertex $z^h_{i,j}$ is at most $2^{n+1}$ as every edge length is a multiple of $\nicefrac{1}{n}$ and the subpath $O^h_{i,j}$ between $z^h_{i,j}$ and $z^{\suc h}_{i,j}$ has length $2^{n+2} + \nicefrac{1}{n}$. Moreover, we have $\dist(z^h_{i,j}, y_{i,j}) = 2^{n+1} + 1$, and hence, the distance between any $v \in O_{i,j}$ and $y_{i,j}$ is bounded by $2^{n+2} +1$.

Consider now some vertex $x^h_{i,j}$. The distance from $x^h_{i,j}$ to any vertex $v^h_{(a,b)}$ is $d^h_{(a,b)}$ and the length of the path from  $d^h_{(a,b)}$ to $y_{i,j}$ via $z^h_{i,j}$ is $d_{(a,b)} + 2^{n+1} +1 $. It follows, that
\begin{align}\label{eq:dist_xy}
	\dist(x^h_{i,j}, y_{i,j}) \leq d^h_{(a,b)} + d_{(a,b)} + 2^{n+1} + 1 \leq 2^{n+2} + 2^{n} + 2,
\end{align}
where the last inequality follows from the fact that $d^h_{(a,b)} + d_{(a,b)} \leq 2^{n+1} + 2^n + 1$, which is easy to verify.
Assume now that $v \in u^h_b$ for some $h \in \{1,3\}$ and $b \in [n]$. The shortest path $\pi(u^h_b, y_{i,j})$ passes through the portal edge $\{u^h_b, \rho^h_b\}$ of length at most $2^b$ and the vertex $z^h_{i,j}$. The distance from $\rho^h_b$ to $z^h_{i,j}$ is at most $2^b$ and it follows that
\[
		\dist(u^h_b, y_{i,j}) \leq 2^b + 2^b + 2^{n+1} + 1 \leq 2^{n+2} + 1.
\]
It remains to consider the case where $v$ is not contained in any gadget. If this holds, $v$ is contained in some path $P_{i,j}$ or $P'_{i,j}$ between two gadgets. The lengths of these paths is bounded by $2^n - 2$ and hence, there is some vertex $x^h_{i,j}$ such that $\dist(v, x^h_{i,j}) \leq 2^n - 2$. It follows from \cref{eq:dist_xy}, that $\dist(v,y_{i,j}) \leq 2^{n+2} + 2^{n+1}$.
\end{proof}

\subsection{Correctness of the Reduction}
We show now that the $\prob{GT}_\leq$-instance $\mathcal{I}$ has a solution if and only if the \prob{$k$-Center} instance $G_{\mathcal I}$ has a solution of cost at most $2^{n+1}$ for $k = 5 \chi^2$ centers.

\begin{lemma}\label{lem:correctness1}
A solution for the $\prob{GT}_\leq$-instance $\mathcal{I}$ implies a solution for the \prob{$k$-Center} instance $G_{\mathcal I}$ of cost at most $2^{n+1}$.
\end{lemma}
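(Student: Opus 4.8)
The plan is to turn a solution $(s_{i,j})_{(i,j) \in [\chi]^2}$ of the $\prob{GT}_\leq$-instance into a set $C$ of $5\chi^2$ centers that covers every vertex within radius $2^{n+1}$. For each gadget $G_{i,j}$ with $s_{i,j} = (a,b)$, I would pick five centers: the four vertices $v^1_{(a,b)}, v^2_{(a,b)}, v^3_{(a,b)}, v^4_{(a,b)}$ on the cycle $O_{i,j}$, and the central vertex $y_{i,j}$. The first task is to check that inside a single gadget these five centers dominate everything at radius $2^{n+1}$: the vertex $v^h_{(a,b)}$ is at distance exactly $2^{n+1}$ from the auxiliary vertices $\psi^{\suc h}$-type vertices placed along $O^h_{i,j}$ (by the spacing $2^{n+1}$, $2^{n+1}+\nicefrac1n$ chosen in the construction, together with the fact that $\mathrm{dist}(v^h_{(a,b)}, v^{\suc h}_{(a,b)}) = 2^{n+2}+\nicefrac1n$), so consecutive chosen cycle-vertices jointly cover the arc between them; the central vertex $y_{i,j}$ at distance $2^{n+1}+1$ from each $z^h_{i,j}$ patches the short remaining stretches near the $z^h_{i,j}$; and the distances $d^h_{(a,b)}$ computed in the construction show $x^h_{i,j}$ and the portal paths $U^h_{i,j}$ are covered — here I would invoke \cref{lem:sp-structure} to know the shortest path from a cycle vertex to $u^h_\beta$ goes through the right portal, and then just add up the geometric-series edge lengths $\sum 2^\lambda$ to confirm $\mathrm{dist}(v^h_{(a,b)}, x^h_{i,j}) = d^h_{(a,b)} \le 2^{n+1}$ for $h \in \{1,3\}$ (value $2^{n+1}-\nicefrac an$ or $2^{n+1}-1+\nicefrac an$) and likewise $d^h_{(a,b)} \le 2^{n+1}$ for $h \in \{2,4\}$.

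The second task is the connecting paths $P_{i,j}$ and $P'_{i,j}$, and this is where the $\prob{GT}_\leq$ inequalities are actually used. A vertex on the path $P_{i,j}$ between $x^3_{i,j}$ and $x^1_{i+1,j}$ lies at distance at most $\tfrac12$ from one of its endpoints (the path has length $1$), so from the two relevant centers $v^3_{(a,b)}$ (for $s_{i,j}=(a,b)$) and $v^1_{(a',b')}$ (for $s_{i+1,j}=(a',b')$) the worst covered distance is roughly $\max(d^3_{(a,b)}, d^1_{(a',b')}) + \tfrac12 \le \max(2^{n+1}-1+\nicefrac an,\ 2^{n+1}-\nicefrac{a'}n) + \tfrac12$; the constraint $a \le a'$ from the tiling solution is exactly what makes this $\le 2^{n+1}$, with the $-1$ and the $\nicefrac1n$-scale slack absorbing the $\tfrac12$. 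For the vertical path $P'_{i,j}$ of length $2^n-2$ joining $x^2_{i,j}$ and $x^4_{i,j+1}$, I would split it at its midpoint: the half nearer $x^2_{i,j}$ is covered from $v^2_{(a,b)}$ using $d^2_{(a,b)} = 2^n+2^b+\nicefrac an$ plus the accumulated edge lengths, and the half nearer $x^4_{i,j+1}$ from $v^4_{(a',b')}$ using $d^4_{(a',b')} = 2^{n+1}+1-2^{b'}-\nicefrac{a'}n$, and the inequality $b \le b'$ is what balances the two and keeps both bounds at $2^{n+1}$. I expect this path-$P'$ bookkeeping — reconciling the doubling edge weights $2^\lambda$ along $P'_{i,j}$ with the portal-side weights and pinning down exactly where the covering responsibility switches from the $v^2$ side to the $v^4$ side — to be the main obstacle; everything else is either local to one gadget or a short triangle-inequality estimate.

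A minor bookkeeping point is to confirm $|C| = 5\chi^2 = k$ (five centers per gadget, disjoint across gadgets, with no center needed on the connecting paths), and to double-check the boundary gadgets where some connecting path is absent — there the argument is only easier. Once the within-gadget covering, the $P_{i,j}$ covering, and the $P'_{i,j}$ covering are each verified, the union covers $V$ at radius $2^{n+1}$, which is the claim.
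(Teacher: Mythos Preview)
Your center choice and the overall decomposition into ``cover each gadget'' then ``cover each connecting path'' match the paper exactly. Two points in your execution are off, though.

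First, your covering argument for the connecting paths does not go through as written. For $P_{i,j}$ you assign the half nearer $x^3_{i,j}$ to $v^3_{(a,b)}$ and the other half to $v^1_{(a',b')}$, claiming a bound of $\max(d^3_{(a,b)},\,d^1_{(a',b')}) + \tfrac12$. But take $a = a' = 1$ with $n$ large: then $d^1_{(a',b')} = 2^{n+1} - \tfrac1n$, so $d^1_{(a',b')} + \tfrac12 > 2^{n+1}$, and the midpoint is \emph{not} within $2^{n+1}$ of $v^1_{(a',b')}$. The constraint $a \le a'$ does not rescue this, because it relates the two $d$-values to each other rather than bounding either one individually. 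The paper sidesteps the whole issue by summing: it shows
\[
d^3_{(a,b)} + \lvert P_{i,j}\rvert + d^1_{(a',b')} = 2^{n+2} + \tfrac{a-a'}{n} \le 2^{n+2},
\]
and likewise $d^2_{(a,b)} + \lvert P'_{i,j}\rvert + d^4_{(a',b')} < 2^{n+2}$. From a total distance of at most $2^{n+2}$ between the two centers along the path, every intermediate vertex is within $2^{n+1}$ of one of them --- no need to locate the handover point. This is both simpler and is where $a\le a'$ (resp.\ $b\le b'$) is actually used. Your anticipated ``main obstacle'' of tracking where the $P'_{i,j}$-responsibility switches disappears entirely with this argument.

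Second, a small slip: $y_{i,j}$ does not ``patch the short remaining stretches near the $z^h_{i,j}$'' --- it sits at distance $2^{n+1}+1$ from each $z^h_{i,j}$, so its radius-$2^{n+1}$ ball contains only $y_{i,j}$ itself. This is harmless (the four $v^h_{(a,b)}$ already cover all of $O_{i,j}$ since consecutive ones are $2^{n+2}+\tfrac1n$ apart and every edge length on the cycle is a multiple of $\tfrac1n$), but you should drop that sentence.
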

\begin{proof}
For $(i,j) \in [n]^2$ let $s_{i,j}$ be the pair from $S_{i,j}$ that is chosen in a solution of $\mathcal{I}$. For the \prob{$k$-Center} instance $G_{\mathcal I}$, we choose a center set $C$ of size $5 \chi^2$ by selecting from every gadget $G_{i,j}$ the central vertex $y_{i,j}$ and the four vertices $v^1_{s_{i,j}}, \dots, v^4_{s_{i,j}}$.
We show that $C$ has cost at most $2^{n+1}$.

Consider a gadget $G_{i,j}$ and the four chosen centers $v^1_{(a,b)}, \dots, v^4_{(a,b)}$. It holds that the distance between any two neighboring centers $v^h_{(a,b)}$ and $v^{\suc h}_{(a,b)}$ is $2^{n+2} + \nicefrac{1}{n}$ and moreover, the length of every edge of the cycle $O_{i,j}$ is a multiple of $\nicefrac{1}{n}$. Hence, it follows that for every vertex $v \in O_{i,j}$ there is some center vertex $v^h_{(a,b)}$ at distance at most $2^{n+1}$. Consider some vertex $u^h_{b'}$ for $h \in \{1,3\}$. It follows from \cref{lem:sp-structure}, that $\dist(v_{(a,b)}^h, u^h_{b'}) \leq 2^{n+1}$. Finally, the vertex $y_{i,j}$ is chosen as a center. This means that the complete gadget $G_{i,j}$ is contained in the five balls of radius $2^{n+1}$ around $y_{i,j}$ and $v^1_{(a,b)}, \dots, v^4_{(a,b)}$.

It remains to show that the chosen centers cover all paths $P_{i,j}$ and $P_{i,j}'$ that connect the individual gadgets. Consider two neighboring gadgets $G_{i,j}$ and $G_{i+1,j}$ and let $s_{i,j} = (a,b)$ and $s_{i+1,j} = (a',b')$ be the corresponding pairs from the solution of $\mathcal{I}$. We have $a \leq a'$. 
From $G_{i,j}$ we have chosen a center $v^3_{(a,b)}$ that has distance $d^3_{(a,b)}$ to $x^3_{i,j}$.  Similarly, we have chosen some $v^1_{(a',b')}$ from $G_{i+1,j}$ whose distance to $x^1_{i+1,j}$ is $d^1_{(a',b')}$. The path $P_{i,j}$ between $x^3_{i,j}$ and $x^1_{i+1,j}$ has length $1$, and hence the distance between the two considered centers is 
\[d^3_{(a,b)} + \vert P_{i,j} \vert + d^1_{(a',b')}= 2^{n+1} - 1 + \frac{a}{n} + 1 + 2^{n+1} - \frac{a'}{n} = 2^{n+2} + \frac{a-a'}{n} \leq 2^{n+2} .\]
This means that $P_{i,j}$ can be covered with balls of radius $2^{n+1}$ around $v^1_{(a',b')}$ and $v^3_{(a,b)}$.
Similarly, $b \leq b'$ yields
\begin{align*}
		d^2_{(a,b)} + \vert P_{i,j}' \vert + d^4_{(a',b')} & = 2^n + 2^b + \frac{a}{n} + 2^n - 2 + 2^{n+1} + 1 - 2^{b'} - \frac{a'}{n} \\
		& = 2^{n+2} - 1+ 2^b - 2^{b'} + \frac{a-a'}{n} < 2^{n+2}
\end{align*}
Hence, any vertex contained in a path $P_{i,j}'$ has distance at most $2^{n+1}$ from a chosen center.
\end{proof}

Moreover, every solution for $G_{\mathcal{I}}$ of cost at most $2^{n+1}$ contains four equidistant vertices $v^1_{(a,b)}, \dots, v^4_{(a,b)}$ from every $G_{i,j}$, which yield a solution for $\mathcal{I}$.
The following lemma completes our correctness proof.

\begin{lemma}\label{lem:correctness2}
A solution for the \prob{$k$-Center} instance $G_{\mathcal{I}}$ of cost at most $2^{n+1}$ implies a solution for the $\prob{GT}_\leq$-instance $\mathcal{I}$.
\end{lemma}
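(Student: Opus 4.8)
The plan is to show that a $k$-center solution $C$ with $k = 5\chi^2$ and cost at most $2^{n+1}$ must, in each gadget $G_{i,j}$, contain exactly five centers, and that four of them are forced to be of the form $v^1_{(a,b)},\dots,v^4_{(a,b)}$ for a single common pair $(a,b) \in S_{i,j}$; the map $(i,j) \mapsto (a,b)$ is then the desired $\prob{GT}_\leq$ solution. First I would argue a counting/covering bound: the central vertices $y_{i,j}$ are pairwise far apart (distance more than $2\cdot 2^{n+1}$, which I would verify from the construction), so no single center can cover two of them; more generally I would identify in each gadget a set of $5$ ``witness'' points that are pairwise at distance greater than $2^{n+2}$ and cannot be covered from outside the gadget, forcing at least $5$ centers inside each $G_{i,j}$, hence exactly $5$ by the budget $k = 5\chi^2$. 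The natural witnesses are $y_{i,j}$ together with the four ``midpoints'' of the arcs $O^h_{i,j}$ (the vertices $\psi^h$, $\psi'^h$ were inserted precisely to pin these down), since a ball of radius $2^{n+1}$ around any vertex other than some $v^h_{(a,b)}$ fails to reach far enough along the cycle.

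Next I would pin down the location of the five centers inside a fixed gadget. One center must cover $y_{i,j}$; since $\dist(z^h_{i,j}, y_{i,j}) = 2^{n+1}+1 > 2^{n+1}$, the only vertices within distance $2^{n+1}$ of $y_{i,j}$ are $y_{i,j}$ itself and interior points of the four connector edges, and I would show the cleanest choice — indeed the forced one once the other four centers are placed on the cycle — is $y_{i,j}$ itself. The remaining four centers must then cover the whole cycle $O_{i,j}$, whose total length is $4(2^{n+2}+\nicefrac1n)$; four balls of radius $2^{n+1}$ have total diameter just under this, so the four centers must be essentially equally spaced at consecutive arc-distance $2^{n+2}+\nicefrac1n$, i.e.\ one per arc $O^h_{i,j}$, and each must lie at a point whose distance to the two arc-endpoints is exactly right. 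Here the key computation (already recorded in the construction) is that $\dist(v^h_{(a,b)}, v^{\suc h}_{(a,b)}) = 2^{n+2}+\nicefrac1n$ for every pair, while any other candidate vertex on $O^h_{i,j}$ is not simultaneously at distance $\le 2^{n+1}$ from the previous and next chosen center; I would also use \cref{lem:sp-structure} to handle coverage of the pendant paths $U^1_{i,j}, U^3_{i,j}$, which forces the chosen vertex on arc $O^1$ (resp.\ $O^3$) to be exactly some $v^1_{(a,\cdot)}$ (resp.\ $v^3_{(a,\cdot)}$) rather than an auxiliary vertex. The upshot is that the four cycle-centers in $G_{i,j}$ are $v^1_{(a,b)}, v^2_{(a,b)}, v^3_{(a,b)}, v^4_{(a,b)}$ for one common pair $(a,b)$, where the ``common'' part follows because the spacing constraint $\dist(v^h_{\cdot}, v^{\suc h}_{\cdot}) = 2^{n+2}+\nicefrac1n$ holds only between $v^h$ and $v^{\suc h}$ with identical labels.

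Finally I would verify the inequality constraints. Define $s_{i,j} = (a,b)$ to be the pair selected in $G_{i,j}$ as above. For horizontally adjacent gadgets $G_{i,j}, G_{i+1,j}$, the path $P_{i,j}$ of length $1$ joining $x^3_{i,j}$ to $x^1_{i+1,j}$ must be covered, and the only centers that can reach it are $v^3_{s_{i,j}}$ and $v^1_{s_{i+1,j}}$ (everything else is too far, which I would check against \cref{lem:dist_closest_center}-style distance bounds). Writing $s_{i,j} = (a,b)$ and $s_{i+1,j} = (a',b')$, coverage forces $d^3_{(a,b)} + 1 + d^1_{(a',b')} \le 2^{n+2}$, i.e.\ $(2^{n+1}-1+\nicefrac an) + 1 + (2^{n+1}-\nicefrac{a'}n) \le 2^{n+2}$, which simplifies to $a \le a'$ — exactly the column constraint, and the $\nicefrac1n$ granularity makes the inequality strict-vs-nonstrict work out. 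Symmetrically, covering the path $P'_{i,j}$ (of length $2^n - 2$) by $v^2_{s_{i,j}}$ and $v^4_{s_{i,j+1}}$ forces $d^2_{(a,b)} + (2^n-2) + d^4_{(a',b')} \le 2^{n+2}$, which reduces to $2^b \le 2^{b'}$, hence $b \le b'$ — the row constraint. Thus $\{s_{i,j}\}$ is a valid solution to $\mathcal I$. The main obstacle I anticipate is the first step: rigorously ruling out ``cheating'' placements — centers sitting on connector paths, or two centers in one gadget and three in a neighbour — which requires a careful global distance analysis showing the witness points really are pairwise far and really are unreachable from outside; the auxiliary vertices $\psi^h, \psi'^h$ and the exact edge lengths (all multiples of $\nicefrac1n$) are the tools the construction provides to close these gaps, and marshalling them cleanly is the delicate part.
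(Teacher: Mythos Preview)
Your three-step outline mirrors the paper's proof: first force $y_{i,j}\in C$ and four further centers on each cycle $O_{i,j}$ via the witness vertices $\psi^h$, then pin those four down to a common label $(a,b)$ using the $\psi^h,\psi'^h$ spacing, and finally read off the $\prob{GT}_\leq$ constraints from coverage of the connecting paths. The structure is right and would succeed.

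One point deserves tightening. In Step~3 you argue that coverage of $P_{i,j}$ (resp.\ $P'_{i,j}$) forces the sum $d^3_{(a,b)}+1+d^1_{(a',b')}\le 2^{n+2}$ (resp.\ the analogous bound). But the graph is discrete: it is not automatic that total distance exceeding $2^{n+2}$ leaves an actual \emph{vertex} uncovered---the uncovered geometric interval could fall strictly between two consecutive vertices. The paper handles this by exhibiting an explicit witness: on $P_{i,j}$ the vertex $w$ at distance $1-\nicefrac{a}{n+1}$ from $x^3_{i,j}$ (note the granularity $\nicefrac{1}{n+1}$, not $\nicefrac{1}{n}$, is what makes this land), and on $P'_{i,j}$ the vertex $w_b$ at distance $2^n-2^b$ from $x^2_{i,j}$. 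For the vertical case in particular, your reduction ``$\le 2^{n+2}$ simplifies to $2^b\le 2^{b'}$'' is not literally what falls out---the sum equals $2^{n+2}-1+2^b-2^{b'}+\nicefrac{(a-a')}{n}$, so the residual $\nicefrac{(a-a')}{n}$ term has to be absorbed; the paper's explicit $w_b$ sidesteps this. So replace the total-distance inequality with the concrete witness vertices and the argument goes through.
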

\begin{proof}
Let $C$ be a solution for $G_{\mathcal{I}}$ of cost at most $2^{n+1}$. Consider a gadget $G_{i,j}$. The central vertex $y_{i,j}$ has distance at least $2^{n+1}+1$ to any other vertex. Hence we have $y_{i,j} \in C$.  Let $C_{i,j}$ be the remaining centers from $C$ that have distance at most $2^{n+1}$ from any vertex of $G_{i,j}$. As $k = 5 \chi^2$, there are at most $4 \chi^2$ such centers in total.
We first show that every $C_{i,j}$ consists of exactly $4$ vertices contained in the cycle $O_{i,j}$, such that any two consecutive vertices have distance $2^{n+2} + \nicefrac{1}{n}$.

\begin{claim}
	For $(i,j) \in [n]^2$ we have $C_{i,j} \subseteq O_{i,j}$ and $\vert C_{i,j} \vert = 4$.
\end{claim}
\begin{claimproof}
Let $(A,B) = \max S_{i,j}$ and let $h \in [4]$. 
We show that $\psi^h_{(A,B)}$ can only be covered through vertices from $G_{i,j} \setminus \{x^1_{i,j}, \dots, x^4_{i,j}\}$.

Consider some vertex $x^{h'}_{i,j}$. The shortest path from $x^{h'}_{i,j}$ to $\psi^h_{(A,B)}$ has to pass trough either $v^h_{(A,B)}$ or $z^h_{i,j}$. The distance from $\psi^h_{(A,B)}$ to $v^h_{(A,B)}$ is $2^{n+1}$ whereas the distance from $\psi^h_{(A,B)}$ to $z^h_{i,j}$ is $2^{n+1} + \nicefrac{1}{n} - 2^B + 1 - \nicefrac{A}{n} > 2^n$. Moreover, the distance from $x^{h'}_{i,j}$ to any vertex in the cycle $O_{i,j}$ is at least $2^n$. It follows that $\dist( x^{h'}_{i,j}, \psi^h_{(A,B)}) > 2^{n+1}$ and hence, $\psi^h_{(A,B)}$ cannot be covered through $x^{h'}_{i,j}$ or any vertex not contained in the gadget $G_{i,j}$.

Moreover, any two of the vertices $\psi^1_{(A,B)}, \dots, \psi^4_{(A,B)}$ have distance at least $2^{n+2} + \nicefrac{1}{n}$ and hence we need at least $4$ centers to cover them with balls of radius $2^{n+1}$. This implies that $C_{i,j} \subseteq G_{i,j} \setminus \{x^1_{i,j}, \dots, x^4_{i,j}\}$ and $\vert C_{i,j} \vert = 4$.

Assume now that $C_{i,j} \not \subseteq O_{i,j}$, which means that some vertex $u^{h}_{b} \in C_{i,j}$ was chosen as a center.  Let $v^h_{(a,b)}$ be the corresponding $b$-portal. \Cref{lem:sp-structure} implies that the distance from $u^{h}_{b}$ to any of the vertices $\psi^1_{(a,b)}, \dots, \psi^4_{(a,b)}$ is more than $2^{n+1}$. Moreover, the pairwise distance of $\psi^1_{(a,b)}, \dots, \psi^4_{(a,b)}$ is at least $2^{n+2} + \nicefrac{1}{n}$. This means that apart from $u^{h}_{b}$, the set $C_{i,j}$ needs to contain $4$ more centers, which contradicts $\vert C_{i,j} \vert = 4$. Hence we obtain $C_{i,j} \subseteq O_{i,j}$.
\end{claimproof}
We now show, that every $C_{i,j}$ contains four equidistant centers $v^h_{(a,b)}$.
\begin{claim}
	For $(i,j) \in [n]^2$ we have $C_{i,j} = \left\{ v^1_{(a,b)}, \dots, v^4_{(a,b)} \right\}$ for $(a,b) \in S_{i,j}$.
\end{claim}

\begin{claimproof}
Let $(\alpha,\beta)$ be the minimum of $S_{i,j}$. Consider the vertex $x^1_{i,j}$. Its distance to $z^1_{i,j}$, $\psi^1_{(\alpha,\beta)}$ and any vertex of $O_{i-1,j}$ is more than $2^{n+1}$. Hence, it must be covered through some vertex $v^1_{(a,b)}$ where $(a,b) \in S_{i,j}$. %
Consider the vertices $\psi'^1_{(a,b)}, \psi^2_{(a,b)}, \psi'^2_{(a,b)}, \psi^3_{(a,b)}, \psi'^3_{(a,b)}, \psi^4_{(a,b)}$. None of them is contained in the ball of radius $2^{n+1}$ around $v^1_{(a,b)}$.
Moreover, for $h \in \{1,2,3\}$, the distance between $\psi'^h_{(a,b)}$ and $\psi^{\suc h}_{(a,b)}$ is $2^{n+2}$, whereas the distance between $\psi'^1_{(a,b)}$ and $\psi'^2_{(a,b)}$ and the distance between $\psi^3_{(a,b)}$ and $\psi^4_{(a,b)}$ are both $2^{n+2} + \nicefrac{1}{n}$. This means that complete cycle $O_{i,j}$ can only be covered with $4$ balls of radius $2^{n+1}$ if we have $\left\{v^2_{(a,b)}, v^3_{(a,b)}, v^4_{(a,b)}\right\} \subseteq C_{i,j}$.
\end{claimproof}
Finally we show that the sets $C_{i,j}$ yield a solution for the $\prob{GT}_\leq$-instance $\mathcal{I}$.
\begin{claim}
For $(i,j) \in [n]^2$ choosing $s_{i,j} = (a,b)$ where $v^1_{(a,b)} \in C_{i,j}$ yields a solution for the $\prob{GT}_\leq$-instance $\mathcal{I}$.
\end{claim}
\begin{claimproof}
		Let $s_{i,j} = (a,b)$ and $s_{i+1,j} = (a',b')$ and assume that $a > a'$. Consider the path $P_{i,j}$ connecting the vertices $x^3_{i,j}$ and $x^1_{i+1,j}$. As the path $P_{i,j}$ consists of $n+1$ edges of length $\nicefrac{1}{(n+1)}$, it contains a vertex $w$ that has distance $1 - \nicefrac{a}{(n+1)}$ from $x^3_{i,j}$ and distance $\nicefrac{a}{(n+1)}$ from $x^1_{i+1,j}$. It follows that the distances from $w$ to the closest centers in $G_{i,j}$ and $G_{i+1,j}$ are
\begin{align*}
		& d^3_{(a,b)} + 1 - \frac{a}{n+1} = 2^{n+1} - 1 + \frac{a}{n} + 1 - \frac{a}{n+1} > 2^{n+1} \quad \text{and} \\
		& d^1_{(a',b')} + \frac{a}{n+1} = 2^{n+1} - \frac{a'}{n} + \frac{a}{n+1} > 2^{n+1},
\end{align*}
respectively. This contradicts the fact that $C$ is a solution for the \prob{$k$-Center} instance, and hence $a \leq a'$.
Similarly, let $s_{i,j} = (a,b)$ and $s_{i,j+1} = (a',b')$ and assume that $b > b'$. Consider the path $P'_{i,j} = w_1 \dots w_n$ connecting $x^2_{i,j}$ and $x^4_{i,j+1}$. Recall that every edge $\{w_{\lambda+1}, w_{\lambda}\}$ has length $2^{\lambda}$ and hence we have $\dist(x^2_{i,j}, w_b) = 2^{n} - 2^b$ and $\dist(w_b, x^4_{i,j+1}) = 2^b - 2$.
It follows that the distances from $w_{b}$ to the closest centers in $G_{i,j}$ and $G_{i,j+1}$ are
\begin{align*}
		d^2_{(a,b)} + 2^n - 2^b & = 2^n + 2^b + \frac{a}{n} + 2^n - 2^b = 2^{n+1} + \frac{a}{n} > 2^{n+1} \quad \text{and}\\
		d^4_{(a',b')} + 2^b - 2 & = 2^{n+1} + 1 - 2^{b'} + \frac{a'}{n} + 2^b - 2 \geq 2^{n+1} + \frac{a'}{n} > 2^{n+1},
\end{align*}
respectively, which gives a contradiction. It follows that $b \leq b'$ and hence, choosing $s_{i,j} = (a,b)$ for $v^1_{(a,b)} \in C_{i,j}$ yields a solution for $\mathcal{I}$.
\end{claimproof}
This completes the proof as any solution $C$ of cost at most $2^{n+1}$ for the \prob{$k$-Center} instance $G_{\mathcal{I}}$ implies a solution for the $\prob{GT}_\leq$-instance $\mathcal{I}$.
\end{proof}

\section{Bounds on Graph Parameters}
In this section we show bounds on the doubling dimension, the highway dimension, the skeleton dimension and the pathwidth of the graph $G_{\mathcal{I}}$, which imply \cref{thm:main_result}. 
To bound the doubling dimension, we exploit the fact that the individual gadgets $G_{i,j}$ are connected in a grid-like fashion. This means that we can bound the diameter of balls within this grid.
For that purpose, let $A_{i,j}(d) = \{ (i',j') \in [\chi]^2 \mid |i'-i| + |j'-j| \leq d \}$. Moreover, let $V_{i,j}(d)$ be the vertices of all gadgets $G_{i',j'}$ satisfying $(i',j') \in A_{i,j}(d)$ and the vertices on the paths $P_{i',j'}$ and $P'_{i',j'}$ between these gadgets.
We now bound the diameter of the graph induced by $V_{i,j}(d)$.

\begin{lemma}\label{lem:diam_Vij}
Consider the graph induced by $V_{i,j}(d)$. Its diameter is at most $(2^{n+3} + 2^{n+1} + 2^n + 2) \cdot (2d + 1)$. Moreover, if $\vert A_{i,j}(d) \vert = (2d+1)^2$, i.e.\ $A_{i,j}(d)$ contains all possible index pairs, the diameter is at least $(2^{n+2} + 2^n) \cdot (2d+1)$.
\end{lemma}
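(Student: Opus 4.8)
The plan is to prove the upper and lower bounds separately, in both cases exploiting the grid geometry of $A_{i,j}(d)$. The key preliminary observation is that any two index pairs $(i_1,j_1),(i_2,j_2)\in A_{i,j}(d)$ satisfy $|i_1-i_2|+|j_1-j_2|\le 2d$ (triangle inequality, since both are within $L_1$-distance $d$ of $(i,j)$) and are joined by a grid path of that length which stays inside $A_{i,j}(d)$ (one routes along a monotone staircase hugging $(i,j)$, which preserves $A_{i,j}(d)$-membership). Hence $V_{i,j}(d)$ can be crossed gadget-by-gadget in at most $2d$ steps, and such a route stays inside $V_{i,j}(d)$.

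For the upper bound, set $Q=2^{n+3}+2^{n+1}+2^n+2$. I would proceed in three steps. First, the proof of \cref{lem:dist_closest_center} already shows that every vertex of a gadget $G_{i',j'}$ is within distance $2^{n+2}+2^n+2$ of its central vertex $y_{i',j'}$; and for a vertex $w$ on a connecting path one gets the same bound $\dist(w,y_{g})\le 2^{n+2}+2^n+2$ for a suitable endpoint gadget $g\in A_{i,j}(d)$, by routing through the incident port $x^h$ and using $\dist(x^h,y)\le 2^{n+2}+2^n+2$ (for a $P'$-path one routes through the $x^4$-end, where $d^4_{(a,b)}+d_{(a,b)}=2^{n+1}$ gives $\dist(x^4,y)\le 2^{n+2}+1$, absorbing the length $\le 2^{n-1}-2$ needed to reach it). Second, for adjacent gadgets one checks $\dist(y_{i',j'},y_{i'\pm1,j'}),\dist(y_{i',j'},y_{i',j'\pm1})\le Q$ via the explicit detours $y\rightsquigarrow x^3\rightsquigarrow P\rightsquigarrow x^1\rightsquigarrow y$ and $y\rightsquigarrow x^2\rightsquigarrow P'\rightsquigarrow x^4\rightsquigarrow y$, using $\dist(x^1,y),\dist(x^3,y)\le 2^{n+2}+3$ (route through a pair with second coordinate $1$, which exists by the assumption on $\mathcal I$), $|P|=1$, and $|P'|=2^n-2$. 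Third, for arbitrary $u,v\in V_{i,j}(d)$ pick endpoint gadgets $g_u,g_v\in A_{i,j}(d)$ as above, route $u\rightsquigarrow y_{g_u}\rightsquigarrow\cdots\rightsquigarrow y_{g_v}\rightsquigarrow v$ along a grid-geodesic (at most $2d$ hops of length $\le Q$), and add up: $\dist(u,v)\le 2(2^{n+2}+2^n+2)+2dQ=2^{n+3}+2^{n+1}+4+2dQ\le (2d+1)Q$, the last step being $2^{n+3}+2^{n+1}+4\le Q$, i.e.\ $4\le 2^n+2$.

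For the lower bound, assume $A_{i,j}(d)$ contains all index pairs; then $V_{i,j}(d)$ equals $G_{\mathcal I}$ and in particular contains a full row of $2d+1$ consecutive gadgets $G_{i-d,j},\dots,G_{i+d,j}$ together with the $P$-paths between them. I would exhibit a \emph{progress} function $f$ on $G_{\mathcal I}$ whose value changes by at most the edge length across every edge, with $f=0$ at $x^1_{i-d,j}$ and $f=(2d+1)(2^{n+3}-2)$ at $x^3_{i+d,j}$; this forces $\dist(x^1_{i-d,j},x^3_{i+d,j})\ge (2d+1)(2^{n+3}-2)\ge (2d+1)(2^{n+2}+2^n)$. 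To build $f$, give every vertex of column $i'$ the offset $(i'-(i-d))(2^{n+3}-2)$ plus $\min\bigl(\dist_{\text{column }i'}(w,\{x^1\text{-vertices}\}),\,2^{n+3}-2\bigr)$; on the $P$-paths $f$ is constant, so it is $1$-Lipschitz everywhere. The point that makes this work is that inside a single column the distance between any $x^1$-vertex and any $x^3$-vertex is at least $2^{n+3}-2$: a path from $x^1$ leaves along $U^1$ and reaches the cycle only after cost $\ge 2^{n+1}-1$, then has to cross from the arc $O^1$ to the opposite arc $O^3$ at cost $\ge 2^{n+2}$ (whether around the cycle or via the spokes to $y_{i',j'}$), and finally leaves to $x^3$ along $U^3$ at cost $\ge 2^{n+1}-1$; a path using two or more gadgets of the column is only longer. (For $d=0$ this is exactly the single-gadget bound $\dist(x^1_{i,j},x^3_{i,j})\ge 2^{n+3}-2$.)

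I expect the main obstacle to be precisely this within-column distance bound underlying the lower bound: one must rule out all cheaper ways of getting between the $x^1$- and $x^3$-ports of a column, in particular that stepping off a pendant path $U^h$ always costs $\ge 2^{n+1}-1$, that crossing between opposite arcs of a cycle is $\ge 2^{n+2}$ even when detouring through $y_{i',j'}$, and that rerouting through neighbouring gadgets via the $x^2$/$x^4$ ports and the $P'$-paths is strictly worse. The remaining difficulty is purely bookkeeping: keeping all the additive constants in the upper bound below $Q$ and handling the (degenerate) tiny-$n$ cases separately.
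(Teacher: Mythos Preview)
Your approach is correct but differs from the paper's in both halves.

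For the upper bound, the paper does not route through the central vertices $y_{i',j'}$. Instead, after proving the auxiliary claim $2^{n+2}+2^n<\dist(x^h_{i,j},x^{h'}_{i,j})\le 2^{n+3}+2^{n+1}+4$ for all $h\ne h'$, it simply observes that any shortest path in the induced graph on $V_{i,j}(d)$ traverses at most $2d+1$ gadgets and at most $2d$ connecting paths, bounds the diameter of a single gadget by $2^{n+3}+2^{n+1}+4$ (from the claim together with \cref{lem:dist_closest_center}) and each connecting path by $2^n-2$, and sums. This avoids your per-hop estimate on $\dist(y_g,y_{g'})$ and the endpoint case analysis.

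For the lower bound, the paper again uses the auxiliary claim directly: any path from $x^1_{i-d,j}$ to $x^4_{i+d,j}$ must pass through $2d+1$ gadgets, each time entering and leaving at distinct ports $x^h,x^{h'}$ and hence paying more than $2^{n+2}+2^n$. Your $1$-Lipschitz potential based on within-column distances to the $x^1$-ports is a valid alternative and even yields the sharper constant $2^{n+3}-2$ (since $x^1$ and $x^3$ sit on opposite arcs), but it is considerably more machinery than the paper's two-line counting argument, and it still relies on the same sort of port-to-port lower bound that the paper isolates in its claim. One small correction: the hypothesis ``$|A_{i,j}(d)|=(2d+1)^2$, i.e.\ $A_{i,j}(d)$ contains all possible index pairs'' is intended to mean that the index ball $A_{i,j}(d)$ is not clipped by the boundary of $[\chi]^2$, not that $V_{i,j}(d)=G_{\mathcal I}$; this does not affect your argument, since all you actually use is that the full row $(i-d,j),\dots,(i+d,j)$ lies in $A_{i,j}(d)$, and distances in the induced subgraph can only be larger.
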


\begin{proof}
Let $(i,j) \in [\chi]$. We first bound the distance between any $x^h_{i,j}$ and $x^{h'}_{i,j}$.

\begin{claim}
For $h, h' \in [4]$ where $h \neq h'$ we have $2^{n+2} + 2^n < \dist(x^h_{i,j}, x^{h'}_{i,j}) \leq 2^{n+3} + 2^{n+1} + 4$. 
\end{claim}
\begin{claimproof}
The upper bound follows directly from \cref{eq:dist_xy} in the proof of \cref{lem:dist_closest_center}. For the lower bound, observe that the shortest path between $x^h_{i,j}$ and $x^{h'}_{i,j}$ needs to pass through two vertices $v^h_{(a,b)}$ and $v^{h'}_{(a',b')}$ of the cycle $O_{i,j}$. It holds that the distance from $x^h_{i,j}$ to $v^h_{(a,b)}$ and from $x^{h'}_{i,j}$ to $v^{h'}_{(a',b')}$ are $d^h_{(a,b)}$ and $d^{h'}_{(a',b')}$, which are both at least $2^n$.
Moreover, the distance from $v^h_{(a,b)}$ to $v^{h'}_{(a',b')}$ is minimized, if $v^h_{(a,b)} = v^h_{(n,n)}$ and $v^{h'}_{(a',b')} = v^{\suc h}_{(1,1)}$. As $\dist(v^h_{(n,n)}, v^{\suc h}_{(1,1)}) = 2^{n+2} + \nicefrac{1}{n} + d_{(1,1)} - d_{(n,n)} = 2^{n+1} + 2^{n} + 1 + \nicefrac{1}{n}$, a lower bound of $2 \cdot 2^n + 2^{n+1} + 2^{n} + 1 + \nicefrac{1}{n} > 2^{n+2} + 2^n$ on $\dist(x^h_{i,j}, x^{h'}_{i,j})$ follows.
\end{claimproof}
Consider the graph induced by $V_{i,j}(d)$. Any shortest path in this graph traverses at most $2d+1$ gadgets and contains at most $2d$ paths between two gadgets. These paths have length at most $2^n-2$. Moreover, it follows from the proof of \cref{lem:dist_closest_center} that the diameter of a single gadget is at most $2^{n+3}+2^{n+1}+4$.
This means that the distance of any shortest path is upper bounded by $(2d+1) \cdot (2^{n+3}+2^{n+1}+2^n+2)$.

If $\vert A_{i,j}(d) \vert = (2d+1)^2$, the shortest path from $x^1_{i-d,j}$ to $x^4_{i+d,j}$ has to traverse $2d+1$ gadgets hence a lower bound of $(2d+1) \cdot (2^{n+2}+2^n)$ on the diameter of the graph induced by $V_{i,j}(d)$ follows.
\end{proof}

This allows us to show that the doubling dimension of $G_{\mathcal{I}}$ is constant.
\begin{lemma}
	The graph $G_{\mathcal{I}}$ is planar and has constant doubling dimension.
\end{lemma}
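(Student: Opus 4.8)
The plan is to prove the two assertions separately: planarity is routine, and bounding the doubling dimension is the substantive part, for which I would combine \cref{lem:diam_Vij} with the grid structure.

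\emph{Planarity.} I would exhibit a planar embedding of each gadget $G_{i,j}$ in which $x^1_{i,j},\dots,x^4_{i,j}$ lie on the outer face, and then glue the gadgets along the grid. Draw $O_{i,j}$ as a Jordan curve carrying all its subdivision vertices $v^h_{(a,b)},\psi^h_{(a,b)},\psi'^h_{(a,b)}$, place $y_{i,j}$ in the bounded region with its four edges to $z^1_{i,j},\dots,z^4_{i,j}$ drawn without crossings, and put everything else in the unbounded region. Attach $x^2_{i,j}$ and $x^4_{i,j}$ by their single edges. For $h\in\{1,3\}$, the portals $\rho^h_1,\dots,\rho^h_n$ occur in this order along the arc $O^h_{i,j}$, so drawing $U^h_{i,j}$ as a curve parallel to $O^h_{i,j}$ with the portal edges as rungs yields a planar ladder ending in $x^h_{i,j}$. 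Since the arcs $O^1,\dots,O^4$ occur in cyclic order around the curve, so do $x^1_{i,j},\dots,x^4_{i,j}$ on the outer face; as $x^1$ connects toward $(i-1,j)$, $x^3$ toward $(i+1,j)$, $x^2$ toward $(i,j+1)$ and $x^4$ toward $(i,j-1)$, identifying the gadgets through $P_{i,j}$ and $P'_{i,j}$ respects this cyclic order, so $G_{\mathcal I}$ is planar.

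\emph{Doubling dimension.} Fix a vertex $v$ — say $v$ lies in $G_{i,j}$ or on a path incident to it — and a radius $r$, and bound the number of radius-$r$ balls needed to cover $B_{2r}(v)$. By the lower bound in the proof of \cref{lem:diam_Vij}, traversing a full gadget costs more than $2^{n+2}+2^n$, so $B_{4r}(v)\subseteq V_{i,j}(d_r)$ with $d_r=O(1+r\,2^{-n})$. If $r\ge c\,2^n$ for a suitable constant $c$, I would cover the index ball $A_{i,j}(d_r)$ by $O(1)$ overlapping index balls $A_{i',j'}(d')$ with $d'$ a fixed constant fraction of $d_r$ — possible because the gadget-index grid with the $\ell_1$-metric has doubling dimension $O(1)$ — and note that by the upper bound in \cref{lem:diam_Vij} each $V_{i',j'}(d')$ has diameter $O(2^n d')<r$, hence lies in a single radius-$r$ ball; together these balls cover $B_{2r}(v)$. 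For $r<c\,2^n$ we get $d_r=O(1)$, so $B_{2r}(v)$ is contained in a subgraph $V_{i,j}(D)$ for a fixed constant $D$, spanning only a constant number of gadgets, and it remains to show that any such subgraph is $O(1)$-doubling, uniformly in the instance.

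For the latter I would decompose $V_{i,j}(D)$ into a constant number of pieces — the cycles $O_{i',j'}$, the singletons $\{y_{i',j'}\}$, the sets $\{x^1_{i',j'},\dots,x^4_{i',j'}\}$, the paths $U^1_{i',j'},U^3_{i',j'}$, and the interiors of the connecting paths — and use the standard fact that a union of $O(1)$ subspaces, each of doubling dimension $O(1)$ in the ambient metric, again has doubling dimension $O(1)$. The $U^h$-pieces and the connecting-path pieces are isometric to intervals and the finite pieces are trivial; and for a cycle $O_{i',j'}$ one checks that in the ambient metric a ball $B_{2r}(v)$ is the union of $O(1)$ arcs of circle-length $O(r)$ — the direct arc around $v$ together with $O(1)$ short arcs near the $z^h_{i',j'}$ that become reachable through the hub $y_{i',j'}$ — so $O_{i',j'}$ is $O(1)$-doubling as well. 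The part I expect to require the most care is precisely verifying that these are the only pieces and the only shortcuts that matter: one has to rule out that a shortest path between two vertices of a single piece takes a detour — through a $U^h$-path, through a hub $y_{i',j'}$, or out of the gadget and back through some $x^{h'}$ — that would distort that piece's metric, and each such detour is excluded by the distance inequalities already established in the proofs of \cref{lem:sp-structure,lem:dist_closest_center}. Choosing $c$ and $D$ so that the two radius regimes overlap, and assigning the connecting paths so that consecutive pieces overlap consistently, is routine bookkeeping.
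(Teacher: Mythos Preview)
Your proposal follows the paper's proof: planarity is treated as routine (the paper just writes ``it can be seen easily''), and for the doubling bound you split on the radius, using \cref{lem:diam_Vij} together with the $\ell_1$-doubling of the gadget-index grid for large $r$ and then handling a constant number of gadgets piece by piece for small $r$. The only difference is packaging in the small-$r$ regime: the paper further splits at $r=2^{n+1}$ and names explicit covering centers, whereas you phrase the same content as ``a union of $O(1)$ pieces each $O(1)$-doubling in the ambient metric.'' One caveat worth tightening: your assertion that $B_{2r}(v)\cap O_{i',j'}$ consists of $O(1)$ arcs is not literally true when $v\in U^h_{i',j'}$, since for $2r$ just below $2^{b'}$ the ball from $u^h_{b'}$ can touch many portals $\rho^h_\beta$ simultaneously (all sit at distance $2^{b'}-a^*_\beta/n$); however, all such entry points lie inside a single cycle arc of length $O(r)$ near $z^h_{i',j'}$, so $O(1)$ radius-$r$ balls still suffice and the conclusion stands.
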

\begin{proof}
It can be seen easily that $G_{\mathcal{I}}$ is planar. Recall that a graph has doubling dimension at most $d$ if any ball of radius $2r$ can be covered with $2^d$ balls of radius $r$.

To bound the doubling dimension of $G_{\mathcal{I}}$, consider a ball $B_{2r}(v)$ of radius $2r$ around some vertex $v \in V$. \Cref{lem:dist_closest_center} implies that there is a vertex $y_{i,j}$ satisfying $\dist(v,y_{i,j}) \leq 2^{n+2} + 2^{n+1}$.  It follows that the ball $B_{2r}(v)$ is contained in the ball around $y_{i,j}$ that has radius $2^{n+2} + 2^{n+1} + 2r$. Moreover, \cref{lem:diam_Vij} implies that the latter ball in turn is contained in $V_{i,j}(d)$ if $2 \cdot (2^{n+2} + 2^{n+1} + 2r) \leq (2^{n+2} + 2^n) \cdot (2d+1)$. This is true for $2d+1 = 6r / 2^{n+2}$ and $r \geq 2^{n+2} + 2^{n+1}$.

We now show that we can cover the vertices $V_{i,j}(d)$ through a constant number of balls that have radius $r$ and are centered at vertices $y_{i',j'}$. \Cref{lem:diam_Vij} implies that for every $(i',j') \in [\chi]^2$, the ball $B_r(y_{i',j'})$ contains the set $V_{i',j'}(d')$ if $2r \geq (2^{n+3} + 2^{n+1} + 2^n + 2) \cdot (2d' + 1)$. This is the case for $2d'+1 = 2r / 2^{n+4}$. As we want $V_{i',j'}(d')$ to be nonempty, we require $d' \geq 0$, which holds for $r \geq 2^{n+3}$. Hence it suffices to show that there is a constant number of sets  $V_{i',j'}(d')$ whose union contains $V_{i,j}(d)$.

As it was observed in~\cite{Fel20}, the index set $A_{i,j}(d)$ is contained in the union of $\lceil \frac{2d+1}{2d'+1} \rceil^2$ index sets $A_{i',j'}(d')$. It follows that we can cover the vertices $V_{i,j}(d)$ through $\lceil \frac{2d+1}{2d'+1} \rceil^2$ vertex sets $V_{i',j'}(d')$. 
Hence, for $r \geq 2^{n+3}$ we can cover $B_{2r}(v)$ with $\lceil \frac{2d+1}{2d'+1} \rceil^2 = \lceil \frac{6r/2^{n+2}}{2r/2^{n+4}} \rceil^2 = 144$ balls of radius $r$.

Assume now that $r < 2^{n+3}$. We already showed that $B_{2r}(v)$ is contained in $V_{i,j}(d)$ if $2d +1  = 6r/2^{n+2} < 12$, which implies $d < 6$. Hence, the ball $B_{2r}(v)$ intersects at most $\vert A_{i,j}(5) \vert \leq (2 \cdot 5 + 1)^2 = 121$ gadgets $G_{i',j'}$. 
We show that we can cover any of these gadgets $G_{i',j'}$ and the paths to its neighboring gadgets through a constant number of ball $B_r(w)$.

If $r \geq 2^{n+1}$, we can choose the $9$ balls centered at $y_{i',j'}, z^h_{i',j'}$ and $ x^h_{i',j'}$ where $h \in [4]$, as for every $w \in O_{i',j'}$ there is some $z^h_{i',j'}$ satisfying $\dist(z^h_{i',j'}, w) \leq 2^{n+1}$, for $h \in \{1,3\}$ and $b \in [n]$ it holds that $\dist(x^h_{i',j'},u^h_b) \leq 2^{n+1}$ and the length the paths to the neighboring gadgets have length at most $2^n-2$.

Let now $r < 2^{n+1}$. If $v = y_{i',j}$, i.e.\ the ball $B_{2r}(v)$ is centered at $y_{i',j}$,  we can choose $X_{i',j'} = \{y_{i',j'}, z^h_{i',j'} \mid h \in [4]\}$. Otherwise, $B_{2r}(v) \cap O_{i',j'}$ is a subpath of $O_{i',j'}$ that has length at most $4r$, which can be covered by $4$ balls of radius $r$. Similarly, we can also cover $B_{2r}(v) \cap U^h_{i',j'}$, $B_{2r}(v) \cap P_{i',j'}$ and $B_{2r}(v) \cap P'_{i',j'}$ with $4$ balls of radius $r$ each. This means that we can cover $B_{2r} \cap G_{i',j'}$ through a constant number of balls of radius $r$.

It follows that we can cover any ball $B_{2r}(v)$ for any $v \in V$ and any $r > 0$ with a constant number of balls of radius $r$, which completes the proof.
\end{proof}

We next bound the highway dimension of $G_{\mathcal{I}}$.

\begin{lemma}\label{lem:bound_highway_dimension}
The graph $G_{\mathcal{I}}$ has highway dimension $hd \in \mathcal{O}(\chi^2)$.
\end{lemma}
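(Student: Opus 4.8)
The plan is to exhibit, for every radius $r$ and every vertex $v$, a hitting set of size $\mathcal{O}(\chi^2)$ inside $B_{4r}(v)$ that hits all shortest paths of length more than $r$ lying entirely within $B_{4r}(v)$. The natural candidate set is the collection of central vertices $\{y_{i',j'}\}$, of which there are exactly $\chi^2$; I expect these to do almost all of the work, with possibly a bounded number of additional vertices thrown in for small radii.

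First I would dispose of the ``large radius'' regime. By \cref{lem:dist_closest_center}, every vertex of $G_{\mathcal I}$ is within distance $2^{n+2}+2^{n+1}$ of some $y_{i,j}$. Hence if $r \geq 2^{n+2}+2^{n+1}$ (say), then any shortest path $\pi \subseteq B_{4r}(v)$ with $\vert\pi\vert > r$ has both endpoints within distance $2^{n+2}+2^{n+1} \le r$ of central vertices; since $\pi$ is long, I would argue that following $\pi$ from one endpoint one must pass within distance $r$ of the nearby $y_{i,j}$ — more carefully, the ball of radius $2^{n+2}+2^{n+1}$ around the endpoint's closest central vertex contains a positive-length initial segment of $\pi$, and because all those central vertices lie in $B_{5r}(v)$... here one has to be slightly careful that the hitting set is required to lie in $B_{4r}(v)$, not $B_{5r}(v)$. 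The clean way is: pick $v$'s own closest central vertex $y_{i,j}$ (distance $\le 2^{n+2}+2^{n+1}$), and use \cref{lem:diam_Vij} to see that $B_{4r}(v)$ is contained in $V_{i,j}(d)$ for $d = \mathcal{O}(r/2^n)$; then all central vertices whose gadgets meet $V_{i,j}(d)$ lie within $B_{4r}(v)$ provided $r$ is, say, at least a fixed constant times $2^n$, and there are at most $\vert A_{i,j}(d)\vert \le \chi^2$ of them. Any long shortest path inside $B_{4r}(v)$ lives in some gadget-neighborhood and, being longer than $r \gg$ (gadget diameter is $\mathcal{O}(2^n)$), must cross between gadgets via some $x^h_{i',j'}$ and therefore pass within $\mathcal{O}(2^n) \le r$ of an adjacent $y_{i',j'}$; I would make this precise using the diameter bound from \cref{lem:diam_Vij} together with \cref{eq:dist_xy}. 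So for $r$ above a threshold of order $2^n$, the set $\{y_{i',j'} : (i',j')\in A_{i,j}(d)\}$ of size $\mathcal{O}(\chi^2)$ is a valid hitting set.

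Second, the ``small radius'' regime $r = \mathcal{O}(2^n)$. Here the subtlety is that a long-but-short path might sit inside a single gadget — e.g. a subpath of the cycle $O_{i',j'}$, or a subpath of $U^h_{i',j'} \circ \{u^h_n, x^h_{i',j'}\}$, or one of the connecting paths $P_{i',j'}, P'_{i',j'}$ — without getting close to $y_{i',j'}$. But $B_{4r}(v)$ with $r = \mathcal{O}(2^n)$ meets only $\mathcal{O}(1)$ gadgets (again by \cref{lem:diam_Vij}, as in the doubling-dimension proof which showed $B_{2r}(v)$ meets at most $121$ gadgets when $r < 2^{n+3}$). Within each such gadget, the relevant ``long thin'' parts are $\mathcal{O}(1)$ in number (four cycle arcs $O^h_{i',j'}$, the two portal paths, the two connecting paths), and each is a simple path, so a shortest path of length $> r$ confined to it can be hit by a single well-chosen vertex per part — e.g. placing a vertex every so often, or more simply noting each such part has length $\mathcal{O}(2^n)$ comparable to $r$, so $\mathcal{O}(1)$ hitting vertices per part suffice, all lying within $B_{4r}(v)$ since the part itself is within that ball. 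Thus $\mathcal{O}(1)$ extra vertices per intersected gadget, i.e. $\mathcal{O}(1)$ total, added to the $\mathcal{O}(\chi^2)$ central vertices, gives the bound.

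The main obstacle I anticipate is the bookkeeping in the large-radius case: ensuring the hitting set genuinely lies inside $B_{4r}(v)$ rather than a slightly larger ball, and verifying that a long shortest path truly must pass \emph{within distance $r$} — not merely ``near'' — one of the chosen central vertices. This hinges on the quantitative gap between the $\mathcal{O}(2^n)$-scale gadget diameter and the length lower bound $r$, so the argument only works once $r$ exceeds a suitable constant multiple of $2^n$; matching that threshold to where the small-radius argument takes over requires choosing the constants consistently. A secondary nuisance is the case where $\pi$ has one endpoint deep inside a gadget far from that gadget's center but is still long enough to be interesting while not leaving the gadget — this is exactly what the ``$\mathcal{O}(1)$ extra vertices'' in the small-radius analysis handles, but one must check it also covers the borderline radii near the threshold. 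None of these steps is deep; they are all of the same flavor as the lemmas already proved.
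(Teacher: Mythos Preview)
Your proposal has a genuine gap: you have misread what a hitting set for shortest paths must do. The set $S \subseteq B_{4r}(v)$ must \emph{intersect} every long shortest path $\pi$, i.e.\ contain a vertex lying on $\pi$; it is not enough for $\pi$ to pass within distance $r$ of a vertex of $S$. Your candidate set $\{y_{i',j'}\}$ fails this almost everywhere: each $y_{i',j'}$ is joined only to the four vertices $z^h_{i',j'}$, so a shortest path contains $y_{i',j'}$ only if it cuts across the gadget through the centre. A long arc of a cycle $O_{i',j'}$, or a shortest path crossing from one gadget to the next via $x^3_{i,j}$, $P_{i,j}$, $x^1_{i+1,j}$, contains no $y$ vertex at all. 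Your own large-$r$ argument already notices that inter-gadget paths contain some $x^h_{i',j'}$; that vertex, not the nearby $y_{i',j'}$, is what belongs in the hitting set. The paper accordingly takes $X=\{y_{i,j},x^h_{i,j},z^h_{i,j}:(i,j)\in[\chi]^2,\,h\in[4]\}$, of size $9\chi^2$: the $x^h$ hit inter-gadget paths, the $z^h$ hit long arcs of a single cycle, and one checks via \cref{lem:sp-structure} that any shortest path from $O_{i,j}$ into $U^h_{i,j}$ has length at most $2^{n+2}$, so $X$ hits everything once $r\ge 2^{n+2}$.

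Your small-$r$ sketch is closer in spirit to the paper's argument, which also places $r/4$-covers along each of the $\mathcal{O}(1)$ path-pieces per gadget. But you only treat paths confined to a single piece; a shortest path from $O_{i,j}$ into $U^h_{i,j}$ consists of a cycle subpath, then a portal edge $\{\rho^h_\beta,u^h_\beta\}$ of length up to $2^\beta$, then a subpath of $U^h_{i,j}$, and this can have length exceeding $r$ even when both subpaths are shorter than $r/4$. The paper handles this case by a short contradiction: if neither subpath meets the $r/4$-covers then the portal edge has length $>r/2$, so $2^\beta > r/2$; but then the edge $\{u^h_{\beta-1},u^h_\beta\}$ of $U^h_{i,j}$ has length $2^{\beta-1} > r/4$, forcing $u^h_\beta$ into the greedy $r/4$-cover of $U^h_{i,j}$ after all.
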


\begin{proof}
		For any radius $r>0$ we specify a set $H_r$ such that every shortest path $\pi$ satisfying $\vert \pi \vert > r$ intersects $H_r$ and moreover, for every vertex $v \in V$ we have $\vert H_r \cap B_{4r}(v) \vert \in \mathcal{O}(\chi^2)$. Let $X = \{ y_{i,j}, x^h_{i,j}, z^h_{i,j} \mid (i,j) \in [\chi]^2, h \in [4] \}$. For $r \geq 2^{n+2}$ we choose $H_r = X$. We have $\vert H_r \vert = 9 \chi^2$ and hence for every vertex $v \in V$ we have $\vert H_r \cap B_{4r}(v) \vert \in \mathcal{O}(\chi^2)$.  We show now that any shortest path of length more than $r$ intersects $H_r$. Clearly, all shortest paths that are not completely contained within one single gadget are hit by $H_r$ as all $x^h_{i,j}$ are contained in $H_r$ and the paths $P_{i,j}$ and $P'_{i,j}$ between the individual gadgets have length at most $2^n - 2$. Consider some gadget $G_{i,j}$. All edges of the cycle $O_{i,j}$ have length at least $\nicefrac{1}{n}$ and for any $h \in [4]$ we have $\dist(z^h_{i,j}, z^{\suc h}_{i,j}) = 2^{n+2} + \nicefrac{1}{n}$. Hence, any subpath of $O_{i,j}$ that has length at least $2^{n+2}$ intersects $H_r$. Moreover, for $h \in \{1,3\}$, the path $U^h_{i,j}$ has length $2^n-2$. 
	
It remains to consider some shortest path $\pi(s,t)$ where $s \in O_{i,j}$ and $t \in U^h_{i,j}$. Let $t = u^h_b$.
According to \cref{lem:sp-structure}, the shortest path $\pi(s,t)$ traverses exactly one portal edge $\{\rho^h_\beta, u^h_\beta\}$ where $\beta \in [b]$. 
This means that $\dist(s,t) = \dist(s, \rho^h_\beta) + \dist(\rho^h_\beta, u^h_b) \leq \dist(s, \rho^h_\beta) + 2^b$.
The vertex $s$ is contained in the shortest path $\pi(z^h_{i,j}, \rho^h_\beta)$ or in $\pi(\rho^h_\beta, z^{\suc h}_{i,j})$. 
In the first case we have $\dist(s,\rho^h_\beta) < \dist(z^h_{i,j}, \rho^h_\beta) \leq 2^\beta$. This implies that $\dist(s,t) < 2^\beta + 2^b \leq 2^{n+1}$.
In the second case we have $\dist(s,\rho^h_\beta) \leq 2^{n+2} - 2^\beta$ and moreover \cref{lem:sp-structure} implies that $\beta = b$. Hence we obtain $\dist(s,t) \leq 2^{n+2} - 2^\beta + 2^\beta = 2^{n+2}$.
This means that every shortest path of length more than $r \geq 2^{n+2}$ is hit by $H_r$.

Let now $r < 2^{n+2}$. For a shortest path $p = v_1 \dots v_\nu$ and $q >0$ let $p^{\langle q \rangle}$ be a \emph{$q$-cover} of $p$, i.e.\ we have $p^{\langle q \rangle} \subseteq \{v_1, \dots, v_\nu\}$ such that any subpath of $p$ that has length at least $q$ contains some node from $p^{\langle q \rangle}$. We consider $q$-covers $p^{\langle q \rangle}$ that are constructed greedily, i.e.\ we start with $p^{\langle q \rangle} = \{v_1\}$ and iteratively add the closest vertex that has distance at least $q$. For $(i,j) \in [\chi]^2$ let 
\[ X_{i,j} = \bigcup_{h \in [4]} {O^h_{i,j}}^{\langle r/4 \rangle} \cup \bigcup_{h \in \{1,3\}} {U^h_{i,j}}^{\langle r/4 \rangle} \cup \left\{ u^1_n, u^3_n \right\} \cup P_{i,j}^{\langle r/4 \rangle} \cup {P'_{i,j}}^{\langle r/4 \rangle} \]
and choose $H_r = X \cup \bigcup_{(i,j) \in [\chi]^2} X_{i,j}$.
Consider some shortest path $\pi(s,t)$ that has length more than $r$. Clearly, $\pi(s,t)$ is hit by $H_r$ if it contains some node from $X$ or it is a subpath of some cycle $O_{i,j}$, some path $U^h_{i,j}$ or some path $P_{i,j}$ or $P'_{i,j}$. It remains to be shown that $\pi(s,t)$ is also hit by $H_r$ if $s \in O_{i,j}$ and $t \in U^h_{i,j}$. Let $t = u^h_b$. \Cref{lem:sp-structure} implies that $\pi(s,t)$ consists of a subpath $p$ of $O_{i,j}$, a portal edge $\{\rho^h_\beta, u^h_{\beta}\}$ and a subpath $p'$ of $U^h_{i,j}$. Assume that $\pi(s,t)$ is not hit by $H_r$. By the choice of $X_{i,j}$ we have $\vert p \vert < \nicefrac{r}{4}$ and $\vert p' \vert < \nicefrac{r}{4}$. This means that $\dist(\rho^h_\beta, u^h_\beta) > \nicefrac{r}{2}$. By construction of the graph $G_{\mathcal{I}}$ we have $\dist(\rho^h_\beta, u^h_\beta) \leq 2^\beta$ and hence $2^\beta > \nicefrac{r}{2}$. As we have $u^h_{\beta} \not \in X_{i,j}$, it holds that $\beta \not \in  \{1, n\}$ and moreover it follows from the choice of ${U^h_{i,j}}^{\langle r/4 \rangle}$, that $\dist(u^h_{\beta-1}, u^h_{\beta}) \leq \nicefrac{r}{4}$. However, by construction of $G_{\mathcal{I}}$ we have $\dist(u^h_{\beta-1}, u^h_{\beta}) = 2^{\beta-1}$, which implies $2^\beta \leq \nicefrac{r}{2}$, a contradiction to $2^\beta > \nicefrac{r}{2}$. This means that every shortest path of length more than $r$ is hit by $H_r$.

Finally we have to show that for every vertex $v \in V$ we have $\vert H_r \cap B_{4r}(v) \vert \in \mathcal{O}(\chi^2)$. As for the $\nicefrac{r}{4}$-cover of some shortest path $p$ we have $\vert B_{4r}(v) \cap p^{\langle r/4 \rangle} \vert \in \mathcal{O}(1)$, it follows that for every $(i,j) \in [\chi]^2$ we have $\vert B_{4r}(v) \cap X_{i,j} \vert \in \mathcal{O}(1)$. Moreover there are $\chi^2$ different sets $X_{i,j}$ and we have $\vert X \vert = 9 \chi^2$, which implies $\vert H_r \cap B_{4r}(v) \vert \in \mathcal{O}(\chi^2)$.
\end{proof}

Observe, that for any graph $G$ of highway dimension $hd$ and maximum degree $\Delta$, an upper bound of $(\Delta +1)hd$ on the skeleton dimension of $G$ follows~\cite{kos16}.
As the graph $G_{\mathcal{I}}$ has maximum degree $\Delta = 4$, it follows that the skeleton dimension of $G_{\mathcal{I}}$ is bounded by $\mathcal{O}(\chi^2)$.

However, with some more effort, we can show a stronger bound of $\mathcal{O}(\chi)$.
We will use the following lemma, which was shown in~\cite{blu18}.
\begin{lemma}\label{lem:skel_desc}
Consider vertices $u,v,w \in V$ such that $v$ is contained in $\shp{u,w}$. If $w$ is contained in the skeleton of $u$, it is also contained in the skeleton of $v$.
\end{lemma}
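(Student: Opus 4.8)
The plan is to unwind the definition of the skeleton and re-use the same witness vertex. By definition, $w \in T^*_u$ means there is a vertex $w'$ with $w$ lying on $\shp{u,w'}$ and $\dist(u,w) \le 2\dist(w,w')$. I claim that $w'$ is also a witness for $w \in T^*_v$, that is, that $w$ lies on $\shp{v,w'}$ and $\dist(v,w) \le 2\dist(w,w')$; establishing these two facts proves the lemma.

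For the first fact I would use the uniqueness of shortest paths: the portion of $\shp{u,w'}$ from $u$ to $w$ is itself a shortest $u$--$w$ path, hence equals $\shp{u,w}$, which by hypothesis passes through $v$. Therefore $v$ lies on $\shp{u,w'}$ and occurs no later than $w$ along it, so the portion of $\shp{u,w'}$ from $v$ onwards is exactly $\shp{v,w'}$ and it still contains $w$. For the second fact, since $v$ lies on $\shp{u,w}$ we have $\dist(u,w) = \dist(u,v) + \dist(v,w) \ge \dist(v,w)$, and combining this with $\dist(u,w) \le 2\dist(w,w')$ gives $\dist(v,w) \le 2\dist(w,w')$. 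Both conditions required for $w \in T^*_v$ are then met.

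I do not expect a genuine obstacle here; the argument is essentially a direct unfolding of the definition. The one place that needs care is the appeal to uniqueness of shortest paths (assumed throughout the paper, achieved by perturbing the edge weights) used to identify subpaths of $\shp{u,w'}$ with the shortest paths between their endpoints — without uniqueness, $\shp{u,w}$ need not be a prefix of $\shp{u,w'}$ and the reasoning would break. It is also worth noting that although the lemma is stated for vertices $u,v,w \in V$, the same argument applies verbatim when $w$ is an arbitrary point of the geometric realization $\tilde V$, since the witness appearing in the definition of the skeleton is always an honest vertex of $G$.
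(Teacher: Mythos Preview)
Your argument is correct: you unfold the definition, reuse the witness $w'$, and the two required facts follow from uniqueness of shortest paths and the triangle-type identity $\dist(u,w)=\dist(u,v)+\dist(v,w)$. Note, however, that the paper does not actually prove this lemma --- it is quoted verbatim from~\cite{blu18} --- so there is no in-paper proof to compare against; your direct unfolding is the natural proof one would expect in that reference.
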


We now bound the size of a skeleton within a single gadget.
For simplicity, in the following we confuse a graph $G$ and its geometric realization $\tilde G$.

\begin{lemma}\label{lem:skel_gadget}
For any $(i,j) \in [\chi]^2$ and any vertex $s$ contained in $G_{i,j}$, the subtree of the skeleton $T_s^*$ induced by the vertices of $G_{i,j}$ is the union of a constant number of paths.
\end{lemma}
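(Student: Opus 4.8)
The plan is to understand the structure of the shortest path tree $T_s$ restricted to the gadget $G_{i,j}$, and to observe that it decomposes into a bounded number of paths. The gadget $G_{i,j}$ consists of the cycle $O_{i,j}$ (carrying the vertices $v^h_{(a,b)}$, $\psi^h_{(a,b)}$, $\psi'^h_{(a,b)}$ and the corners $z^h_{i,j}$), the central vertex $y_{i,j}$ attached to each $z^h_{i,j}$, the two ``comb'' paths $U^1_{i,j}$ and $U^3_{i,j}$ with their portal edges, and the four connector vertices $x^1_{i,j},\dots,x^4_{i,j}$. First I would fix $s$ inside $G_{i,j}$ and describe $T_s$ inside the gadget. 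Cutting the cycle $O_{i,j}$ at the single point antipodal to $s$ turns the cycle into at most two shortest paths emanating from $s$; the edge from $s$ (or rather from the relevant corner $z^h_{i,j}$) to $y_{i,j}$ is a pendant path; and \cref{lem:sp-structure} tells us exactly how each comb $U^h_{i,j}$ hangs off the cycle --- every $u^h_\beta$ is reached via the single portal $\rho^h_\beta$ and then monotonically along $U^h_{i,j}$, so each comb contributes at most a constant number of additional paths (the spine $U^h_{i,j}$ plus the portal edges, and in $T_s$ only one or two portal edges are actually used as tree edges depending on where $s$ sits, cf.\ the case distinction in \cref{lem:sp-structure}). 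The $x^h_{i,j}$ are leaves hanging off either the cycle (for $h\in\{2,4\}$) or off $u^h_n$ (for $h\in\{1,3\}$), contributing $\mathcal{O}(1)$ more pendant edges. So $T_s$ restricted to $G_{i,j}$ is itself a union of $\mathcal{O}(1)$ paths.

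Next I would pass from $T_s$ to the skeleton $T_s^*$. Since $T_s^*$ is, by definition, a subtree of $\tilde T_s$, and since the subtree of $\tilde T_s$ induced by $V(G_{i,j})$ is a union of a constant number of paths (geometric realizations of the paths just described), the subtree of $T_s^*$ induced by $V(G_{i,j})$ is obtained by intersecting with a subtree, hence is a union of at most that same constant number of paths. Here I would be slightly careful: intersecting a union of $c$ paths with a subtree of the ambient tree can in principle break one path into several pieces, but each piece is a subpath, and the skeleton is a \emph{subtree} (connected), so within $G_{i,j}$ the induced part is a union of at most $c$ subpaths of the $c$ paths --- still $\mathcal{O}(1)$. \cref{lem:skel_desc} is the tool to invoke if one instead wants to argue via: a vertex $w\in G_{i,j}$ lies in $T_s^*$ only if it lies on $\shp{s,w'}$ for some $w'$ with $\dist(s,w)\le 2\dist(w,w')$, and monotonicity along each of the $\mathcal{O}(1)$ paths of $T_s$ then shows the skeleton-part of each path is a prefix (an initial subpath), which is automatically connected.

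The case $s\notin G_{i,j}$: then $s$ lies in some other gadget or on a connector path $P_{i',j'}$ or $P'_{i',j'}$, and $T_s$ enters $G_{i,j}$ through one or two of the vertices $x^h_{i,j}$. Conditioned on the entry vertex, the restriction of $T_s$ to $G_{i,j}$ is a shortest path \emph{forest} rooted at these $\mathcal{O}(1)$ entry points, and by the same analysis as above (applied with the entry vertex playing the role of $s$, using that the structure of shortest paths \emph{inside} the gadget from any $x^h_{i,j}$ is again governed by \cref{lem:sp-structure} and the cycle-cutting argument) it is a union of $\mathcal{O}(1)$ paths; intersecting with $T_s^*$ preserves this.

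The main obstacle I anticipate is not conceptual but bookkeeping: verifying that the portal structure of \cref{lem:sp-structure} does not let the comb paths $U^h_{i,j}$ attach to the cycle at many points in $T_s$ (which would blow up the number of paths by a factor of $n$). The key point, which must be stated cleanly, is that although there are $n$ portal edges, the shortest path \emph{tree} $T_s$ uses at most two of them as tree edges --- once $s$ is fixed, \cref{lem:sp-structure} determines for each $u^h_{b'}$ which portal is used, and these are nested (one prefix-portal and possibly one further portal), so the comb enters $T_s$ as $\mathcal{O}(1)$ paths, not $n$. The second mild subtlety is the geometric realization: ``union of a constant number of paths'' should be read in $\tilde G$, so edge-subdivision points on $O^h_{i,j}$ between consecutive inserted vertices cause no issue, since an entire subdivided edge is a single path. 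I would make both points explicit and leave the inequality-checking (e.g.\ that prefixes under the factor-$2$ skeleton condition stay prefixes) to the routine verifications already present in \cref{lem:sp-structure} and \cref{lem:dist_closest_center}.
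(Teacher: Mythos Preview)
Your proposal has a genuine gap at exactly the point you flag as ``the main obstacle''. You assert that ``the shortest path tree $T_s$ uses at most two of them as tree edges'' (referring to the portal edges), but this is false. Take $s=v^h_{(a,b)}$. By \cref{lem:sp-structure}, for every $b'<b$ the shortest path from $s$ to $u^h_{b'}$ uses the $b'$-portal, not the $b$-portal. Hence in $T_s$ each of the vertices $u^h_1,\dots,u^h_{b-1}$ is a leaf hanging off the cycle via its own portal edge $\{\rho^h_{b'},u^h_{b'}\}$, so $T_s$ restricted to $G_{i,j}$ genuinely contains up to $b$ portal edges and is \emph{not} a union of $\mathcal{O}(1)$ paths. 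Your ``nesting'' intuition does not apply: the portals for $b'<b$ are not nested along a single path, they are parallel pendants.

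The missing idea---and the actual content of the paper's proof---is that the reduction from $T_s$ to the skeleton $T^*_s$ does the work. For $\beta<b-1$ one checks that $\dist(s,\rho^h_\beta)\geq 2^b-2^\beta>2\cdot 2^\beta>2\cdot\dist(\rho^h_\beta,u^h_\beta)$, so the entire portal edge lies beyond the $\tfrac{2}{3}$-reach threshold and is pruned from $T^*_s$; only $\beta\in\{b,b-1\}$ can survive. An analogous computation for $s=u^h_b$ leaves only $\beta\in\{b,b-1,1\}$. Once this is established, the remaining (non-portal) part of the gadget is indeed $\mathcal{O}(1)$ paths as you describe, and \cref{lem:skel_desc} handles general $s$ by reducing to the nearest portals. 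So your overall architecture is right, but you must replace the incorrect claim about $T_s$ with a skeleton-pruning argument for the portal edges.
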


\begin{proof}
We first show that every skeleton contains only a limited number of portal edges.
Recall that the skeleton of a shortest path tree is defined on the geometric realization, where every edge is subdivided into infinitely many infinitely short edges. We refer to vertices that were introduced during this subdivision as \emph{interior vertices}.

\begin{claim}\label{claim:skel_cycle}
Consider a vertex $s = v^h_{(a,b)}$ for $(a,b) \in [n]^2$ and $h \in \{1,3\}$. If the skeleton $T^*_s$ of $s$ contains an interior vertex of a portal edge $\{\rho^h_\beta, u^h_\beta\}$, we have $\beta \in \{b,b-1\}$.
\end{claim}
\begin{claimproof}
Assume $h = 1$. For $\beta > b$, it follows from \cref{lem:sp-structure} that $\{\rho^h_\beta, u^h_\beta\}$ is not contained in the shortest path tree of $s$ and hence, no interior vertex of $\{\rho^h_\beta, u^h_\beta\}$ can be contained in $T^*_s$. Let $\beta < b-1$ and let $\rho^h_\beta = v^h_{(\alpha, \beta)}$. \Cref{lem:sp-structure} implies that $u^h_\beta$ is the furthest descendant of $\rho^h_\beta$ in the shortest path tree $T_s$, and we have $\dist(\rho^h_\beta, u^h_b) < 2^{\beta}$. Moreover, the distance from $v^h_{(a,b)}$ to $\rho^h_\beta$ is $d_{(a,b)} - d_{(\alpha, \beta)} = 2^b - 2^\beta + \nicefrac{(a-\alpha)}{n} > 2^{\beta+1} > \nicefrac{1}{2} \cdot \dist(\rho^h_\beta, u^h_b)$. This means that no interior vertex of $\{\rho^h_\beta, u^h_\beta\}$ can be contained in $T^*_s$. The case $h=3$ can be shown similarly.
\end{claimproof}

\begin{claim}\label{claim:skel_comb}
Consider a vertex $s = u^h_b$ for $b \in [n]$ and $h \in \{1,3\}$. If the skeleton $T^*_s$ of $s$ contains an interior vertex of a portal edge $\{\rho^h_\beta, u^h_\beta\}$, we have $\beta \in \{b,b-1,1\}$.
\end{claim}
\begin{claimproof}
Assume $h = 1$. For $\beta > b$, it follows from \cref{lem:sp-structure} that $\{\rho^h_\beta, u^h_\beta\}$ is not contained in the shortest path tree of $s$ and hence, no interior vertex of $\{\rho^h_\beta, u^h_\beta\}$ can be contained in $T^*_s$. Let $1 < \beta < b-1$ and let $\rho^h_\beta = v^h_{(\alpha, \beta)}$. It follows from \cref{lem:sp-structure} that the furthest possible descendant of $u^h_\beta$ within the shortest path tree of $s$ is $v^h_{(1,\beta)}$, which has distance $2^\beta - \nicefrac{1}{n}$ from $u^h_\beta$. The distance from $u^h_b$ to $u^h_\beta$ is $\dist(u^h_b, u^h_\beta) = 2^b - 2^\beta > 2^{\beta+1} > \dist(u^h_\beta,v^h_{(1,\beta)})$ and hence, no interior vertex of $\{\rho^h_\beta, u^h_\beta\}$ can be contained in $T^*_s$. The case $h=3$ can be shown similarly.
\end{claimproof}

Now, let $T_s$ be the shortest path tree of $s$ and $T_s[G_{i,j}]$ be the subtree of $T_s$ induced by the vertices of $G_{i,j}$.
If we disregard all portal edges $\{\rho^h_b, u^h_b\}$, it follows from \cref{lem:sp-structure} that $T_s[G_{i,j}]$ consists of a constant number of subpaths of the cycle $O_{i,j}$, of the two paths $U^1_{i,j}$ and $U^3_{i,j}$, the $4$ edges incident to $x^1_{i,j}, \dots, x^4_{i,j}$ and some of the edges incident to $y_{i,j}$. Hence, if we do not count the portal edges, the subtree $T_s^*[G_{i,j}]$ of the skeleton $T_s^*$ induced by $G_{i,j}$ consists of a constant number of paths. It remains to be shown that $T_s^*[G_{i,j}]$ intersects only a constant number of portal edges.

Assume that $s = \rho^h_b$ for $h \in \{1,3\}$ and $b \in [n]$. It follows from \cref{claim:skel_cycle} that only $\{\rho^h_{b-1}, u^h_{b-1}\}$ and $\{\rho^h_{b}, u^h_{b}\}$ can intersect $T_s^*[G_{i,j}]$. Consider now a portal edge $\{\rho^{h \modplus 2}_\beta, u^{h \modplus 2}_{\beta}\}$ on the opposite side of the cycle. It follows from \cref{lem:skel_desc} that it can only intersect $T_s^*[G_{i,j}]$, if it also intersects the skeleton of $\rho^{h \modplus 2}_1$ or $\rho^{h \modplus 2}_n$, which according to \cref{claim:skel_comb} holds only for $\beta \in \{1, n-1, n\}$. This means that $T_s^*[G_{i,j}]$ consists of a constant number of paths. If $s \in U^h$ for $h \in \{1,3\}$, we can show the same using \cref{claim:skel_comb} and \cref{lem:skel_desc}.

Assume now that $s$ is contained in the cycle $O_{i,j}$, but not a portal. Let $u$ and $v$ be the two closest portals such that $s$ is contained in $\pi(u,v)$. It follows from \cref{lem:skel_desc}, that $T_s^*[G_{i,j}]$ is a subgraph of $T_u^*[G_{i,j}] \cup T_v^*[G_{i,j}] \cup \pi(u,v)$ and hence, it is the union of a constant number of paths. For similar reasons, the same holds if $s = y_{i,j}$ or $s = x^h_{i,j}$.
\end{proof}

Moreover we can show that every cut in any skeleton of $G_{\mathcal{I}}$ intersects at most $\mathcal{O}(\chi)$ different gadgets and connecting paths between two gadgets.

\begin{lemma}\label{lem:gadgets_in_cut}
For every vertex $s \in V$ and every radius $r > 0$, $\mathrm{Cut}_s^r$ intersects $\mathcal{O}(\chi)$ gadgets $G_{i,j}$ and $\mathcal{O}(\chi)$ paths $P_{i,j}$ and $P'_{i,j}$.
\end{lemma}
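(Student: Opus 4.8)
The plan is to reduce the lemma to a purely metric statement about $G_{\mathcal I}$ and then to prove a monotonicity of distances along the rows of the grid of gadgets. Write $\Delta := 2^{n+3}+2^{n+1}+4$ for the bound on the diameter of a single gadget established in the proof of \cref{lem:diam_Vij}, and for a gadget $G_{i,j}$ let $m(i,j)=\min_{h\in[4]}\dist(s,x^h_{i,j})$ be the distance from $s$ to its nearest port; call $G_{i,j}$ \emph{$r$-critical} if $m(i,j)\in[r-2\Delta,r]$. First I would check that $G_{i,j}$ is $r$-critical whenever $\mathrm{Cut}_s^r$ meets $G_{i,j}$ or one of its incident connecting paths --- in fact using only that these cut vertices are at distance $r$, not that they lie in the skeleton. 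A shortest path from $s$ into $G_{i,j}$ enters through exactly one port $x^h_{i,j}$, so a vertex $v\in\mathrm{Cut}_s^r\cap G_{i,j}$ has $\dist(s,x^h_{i,j})=r-\dist_{G_{i,j}}(x^h_{i,j},v)\in[r-\Delta,r]$, and since any two ports of a gadget lie at distance at most $\Delta$ this forces $m(i,j)\in[r-2\Delta,r]$; likewise a vertex of $\mathrm{Cut}_s^r$ on a connecting path $P_{i,j}$ or $P'_{i,j}$ (of length at most $2^n-2$) puts the port of that path reached by the shortest path from $s$ at distance in $[r-2^n,r]$, again placing the $m$-value of the incident gadget in $[r-2\Delta,r]$. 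As each gadget is incident to at most four connecting paths, it suffices to bound the number of $r$-critical gadgets by $\mathcal O(\chi)$.

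I would bound that number row by row, showing each of the $\chi$ rows $\{G_{1,j},\dots,G_{\chi,j}\}$ contains only $\mathcal O(1)$ $r$-critical gadgets. The statement to prove is that for fixed $j$ the function $i\mapsto m(i,j)$ is unimodal --- strictly decreasing to a unique minimiser and strictly increasing afterwards --- with consecutive values differing by more than $2^{n+2}+2^n$ except for an $\mathcal O(1)$-sized block of indices around the minimiser. Granting this, the critical window $[r-2\Delta,r]$ has length $\mathcal O(2^n)$, so a monotone run with steps exceeding $2^{n+2}+2^n$ meets it in $\mathcal O(1)$ points; each of the two branches then contributes $\mathcal O(1)$ $r$-critical indices, the exceptional block $\mathcal O(1)$ more, and summing over the $\chi$ rows yields $\mathcal O(\chi)$. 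The jump size itself is the easy half: on the increasing branch a shortest path from $s$ to $G_{i+1,j}$ runs through $G_{i,j}$, and in getting from the port of $G_{i,j}$ nearest $s$ to the port where this path leaves $G_{i,j}$ it traverses a full port-to-port subpath, of length more than $2^{n+2}+2^n$ by \cref{lem:diam_Vij}, whereas the horizontal connectors $P_{i,j}$ have length only $1$.

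Establishing the unimodality of $i\mapsto m(i,j)$ is the step I expect to be the main obstacle, because the ``row spine'' threading through the ports $x^1_{i,j},x^3_{i,j}$ and the connectors $P_{i,j}$ is in general not a shortest path of $G_{\mathcal I}$ --- a route to a far gadget of row $j$ may genuinely save by detouring through a neighbouring row --- so one cannot simply invoke convexity of distance along a geodesic. Instead I would work with the separator structure of the grid: deleting the horizontal connectors $P_{i-1,\cdot}$ disconnects the gadgets of columns $\le i-1$ from those of columns $\ge i$, with an analogous statement for rows via the vertical connectors. One should be able to show that a shortest path from $s$ never profitably crosses such a separator column twice --- here the uniform length $2^n-2$ of the vertical connectors, together with the narrow range $(2^{n+2}+2^n,\,2^{n+3}+2^{n+1}+4]$ of port-to-port distances from \cref{lem:diam_Vij}, keeps any saving from a detour below the extra cost of re-crossing --- and to deduce that, once $i$ is two or more columns to one side of the column of $s$, reaching column $i$ at row $j$ forces a full port-to-port traversal of $G_{i-1,j}$ (or of a neighbouring gadget), so that $m(i,j)\ge m(i-1,j)+2^{n+2}+2^n+1$ there, and symmetrically on the other side. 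That is precisely the desired unimodality, modulo the harmless $\mathcal O(1)$ indices near the column of $s$, and getting the ``no profitable re-crossing'' bookkeeping exactly right is the real work. Once it is in place the lemma follows from the first two paragraphs, and, combined with \cref{lem:skel_gadget} and the fact that $\mathrm{Cut}_s^r$ meets each gadget, and each connecting path, in only $\mathcal O(1)$ vertices, it yields the bound $\kappa\in\mathcal O(\chi)$ on the skeleton dimension.
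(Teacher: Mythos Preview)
Your plan is viable and ends at the same per-row counting as the paper, but it takes a harder route. The paper anchors not at the ports $x^h_{i,j}$ but at the central vertices $y_{i,j}$: by \cref{lem:dist_closest_center} every vertex lies within $2^{n+2}+2^{n+1}$ of some $y_{i',j'}$, so a vertex of $\mathrm{Cut}_s^r$ forces $\dist(y_{i_0,j_0},y_{i',j'})$ into a window of width $\mathcal O(2^n)$, where $y_{i_0,j_0}$ is the centre nearest $s$. The paper then records the exact $L_1$ identity
\[
\dist(y_{i,j},y_{i',j'}) \;=\; |i-i'|\bigl(2^{n+3}+4+\tfrac2n\bigr)\;+\;|j-j'|\bigl(2^{n+3}+3+\tfrac2n\bigr),
\]
which makes $i'\mapsto\dist(y_{i_0,j_0},y_{i',j'})$ not merely unimodal but piecewise affine with step exceeding $2^{n+3}$; the window then contains $\mathcal O(1)$ values of $i'$ per row, hence $\mathcal O(\chi)$ centres overall, and each accounts for $\mathcal O(1)$ nearby gadgets and connectors.

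What this buys over your approach: the unimodality you flag as ``the real work'' is, for the centres, an immediate consequence of one clean identity about the fixed family $\{y_{i,j}\}$, rather than a monotonicity claim about $\chi$ separate functions $i\mapsto m(i,j)$ that must each be threaded through possible diagonal detours. The paper admittedly only sketches the $L_1$ identity (``It can be shown\dots''), and justifying it rigorously does use the same grid/separator structure you invoke---so the underlying content is the same---but packaging it as a single distance formula between the $y_{i,j}$ short-circuits the no-re-crossing bookkeeping entirely. Your port-based argument would go through once that bookkeeping is completed; switching anchors to the $y_{i,j}$ just makes it unnecessary.
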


\begin{proof}
It can be shown that for any $(i,j) \in [\chi]^2$, we have $\dist(y_{i,j}, y_{i+1,j}) = 2^{n+3} + 4 + \nicefrac{2}{n}$ and $\dist(y_{i,j}, y_{i,j+1}) = 2^{n+3} + 3 + \nicefrac{2}{n}$. This means that for any $(i,j), (i',j') \in [\chi]^2$ we have 
\begin{align}\label{eq:dist_central}
		\dist(y_{i,j}, y_{i',j'}) = \vert i - i' \vert \cdot (2^{n+3} + 4 + \nicefrac{2}{n}) + \vert j -j' \vert \cdot (2^{n+3} + 3 + \nicefrac{2}{n}).
\end{align}

Let $r > 0, s \in V$ and consider a vertex $v \in \mathrm{Cut}_s^r$. It holds that $\dist(s,v) = r$. According to \cref{lem:dist_closest_center} there are two central vertices $y_{i,j}$ and $y_{i',j'}$ satisfying $\dist(s, y_{i,j}) \leq 2^{n+2} + 2^{n+1}$ and $\dist(v, y_{i',j'}) \leq 2^{n+2} + 2^{n+1}$.
Using the triangle inequality we obtain that $\dist(y_{i,j}, y_{i',j'}) \in [r^-, r^+]$ where $r^- = r - (2^{n+3} + 2^{n+2})$ and $r^+ = r + 2^{n+3} + 2^{n+2}$.
Moreover, the ball around $y_{i',j'}$ of radius $2^{n+2}+2^{n+1}$ intersects $\mathcal{O}(1)$ gadgets $G_{i'',j''}$ and $\mathcal{O}(1)$ paths $P_{i'',j''}$ and $P'_{i'',j''}$. This means that any bound on the size of the set $Y = \{y_{i',j'} \mid \dist(y_{i,j}, y_{i',j'}) \in [r^-, r^+] \}$ yields a bound on the number of gadgets and paths intersecting $\mathrm{Cut}_s^r$.

Consider now a vertex $y_{i',j'} \in Y$. Assume that $i' \geq i$ and consider some $i^* \geq i'+4$. It follows from \Cref{eq:dist_central} and $\dist(y_{i,j}, y_{i',j'}) \geq r^-$ that
\begin{align*}
	\dist(y_{i,j}, y_{i^*,j'}) \geq \dist(y_{i,j}, y_{i',j'}) + 4 \cdot (2^{n+3} + 4 + \nicefrac{2}{n}) > r^+.
\end{align*}
This means that $y_{i^*,j'} \not \in Y$ and it follows that for any $j' \in [\chi]$ we have $\vert \{ i^* \geq i \mid y_{i^*,j'} \in Y \} \vert \leq 3$. Similarly we can show that $\vert \{ i^* \leq i \mid y_{i^*,j'} \in Y \} \vert \leq 3$ for any $j' \in [\chi]$ . This implies $\vert Y \vert \in \mathcal{O}(\chi)$, which completes the  proof.
\end{proof}

Combining \cref{lem:skel_gadget,lem:skel_desc,lem:gadgets_in_cut}, we obtain that the skeleton dimension of $G_{\mathcal{I}}$ is bounded by $\mathcal{O}(\chi)$.

\begin{lemma}\label{lem:bound_skeleton_dimension}
The graph $G_{\mathcal{I}}$ has skeleton dimension $\kappa \in \mathcal{O}(\chi)$.
\end{lemma}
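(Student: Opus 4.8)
The plan is to fix an arbitrary vertex $s\in V$ and radius $r>0$ and to bound $\vert\mathrm{Cut}_s^r\vert$ by $\mathcal{O}(\chi)$; since $\kappa=\max_{s,r}\vert\mathrm{Cut}_s^r\vert$, this yields the claim. Working in the geometric realization, I would decompose $\tilde V$ into the gadgets $\tilde G_{i,j}$ and the interiors of the connecting paths $\tilde P_{i,j},\tilde P'_{i,j}$ (assigning each vertex $x^h_{i,j}$ to its gadget), so that $\mathrm{Cut}_s^r$ splits into its intersections with these pieces. By \cref{lem:gadgets_in_cut}, only $\mathcal{O}(\chi)$ gadgets and $\mathcal{O}(\chi)$ connecting paths meet $\mathrm{Cut}_s^r$, so it suffices to prove that each single such intersection has size $\mathcal{O}(1)$.

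For a connecting path $P\in\{P_{i,j},P'_{i,j}\}$ this is straightforward: $P$ is attached to the rest of $G_{\mathcal I}$ only at its two endpoints, both of which are $x$-vertices, so for any point $v$ in the interior of $P$ the shortest path $\shp{s,v}$ reaches $v$ through one of these two endpoints. Hence $\dist(s,v)$ equals $\dist(s,e)$ plus the arc length along $P$ from the corresponding endpoint $e$ to $v$, which is monotone in the position of $v$ on $P$; each endpoint therefore accounts for at most one point of $\mathrm{Cut}_s^r$ inside $P$, and $\vert\mathrm{Cut}_s^r\cap\mathrm{int}(\tilde P)\vert\le 2$.

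The gadget case is the core. If $s\in G_{i,j}$, then $\mathrm{Cut}_s^r\cap\tilde G_{i,j}$ is contained in the set of points of the subtree $T_s^*[G_{i,j}]$ at distance $r$ from $s$; by \cref{lem:skel_gadget} this subtree is a union of $\mathcal{O}(1)$ paths, and since $T_s^*\subseteq T_s$ each of them is a path of the shortest path tree rooted at $s$. Along any path of a rooted tree the distance to the root first decreases and then increases (it passes once through the least common ancestor of the path's two endpoints), so it equals $r$ at most twice; hence $\vert\mathrm{Cut}_s^r\cap\tilde G_{i,j}\vert=\mathcal{O}(1)$. If $s\notin G_{i,j}$, I would use that $x^1_{i,j},\dots,x^4_{i,j}$ are the only vertices of $G_{i,j}$ with a neighbour outside $G_{i,j}$, so every shortest path from $s$ into $G_{i,j}$ passes through some $x^h_{i,j}$. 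For $v\in\mathrm{Cut}_s^r\cap\tilde G_{i,j}$ with $x^h_{i,j}\in\shp{s,v}$, \cref{lem:skel_desc} (applied with $u=s$, intermediate vertex $x^h_{i,j}$, and $w=v$) shows that $v$ lies in the skeleton of $x^h_{i,j}$, while additivity of length along $\shp{s,v}$ gives $\dist(x^h_{i,j},v)=r-\dist(s,x^h_{i,j})=:r_h$. Thus $\mathrm{Cut}_s^r\cap\tilde G_{i,j}$ is covered by the four sets $\mathrm{Cut}_{x^h_{i,j}}^{r_h}\cap\tilde G_{i,j}$, $h\in[4]$, and since $x^h_{i,j}\in G_{i,j}$ each of these has size $\mathcal{O}(1)$ by the case already treated (uniformly in the radius), so again $\vert\mathrm{Cut}_s^r\cap\tilde G_{i,j}\vert=\mathcal{O}(1)$.

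Summing up, $\vert\mathrm{Cut}_s^r\vert\le\mathcal{O}(\chi)\cdot\mathcal{O}(1)+\mathcal{O}(\chi)\cdot\mathcal{O}(1)=\mathcal{O}(\chi)$, which proves the lemma. The step I expect to be the main obstacle is the case $s\notin G_{i,j}$: \cref{lem:skel_gadget} only controls skeletons rooted inside a gadget, so one must first descend to such a skeleton via \cref{lem:skel_desc}, using that the four vertices $x^h_{i,j}$ separate $G_{i,j}$ from the rest of $G_{\mathcal I}$; one must also keep in mind that $\mathrm{Cut}_s^r$ is defined on the geometric realization, so the points involved may lie in the interiors of edges and the monotonicity and unimodality arguments are to be read continuously rather than combinatorially.
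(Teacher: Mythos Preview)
Your proposal is correct and follows essentially the same approach as the paper: decompose $\mathrm{Cut}_s^r$ into its intersections with gadgets and connecting paths, use \cref{lem:gadgets_in_cut} to bound the number of pieces by $\mathcal{O}(\chi)$, and then bound each intersection by $\mathcal{O}(1)$ via \cref{lem:skel_gadget} for $s\in G_{i,j}$ and via \cref{lem:skel_desc} together with the separation property of the $x^h_{i,j}$ for $s\notin G_{i,j}$. Your write-up is in fact a bit more explicit than the paper's in justifying why a path in $T_s^*$ is hit at most twice (unimodality of the root distance) and in handling the geometric realization, but the structure and the key lemmas invoked are identical.
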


\begin{proof}
Let $s \in V, r > 0$ and consider $\mathrm{Cut}^r_s$. Any vertex $v \in \mathrm{Cut}^r_s$ is either contained in some gadget $G_{i,j}$ or some connecting path $P_{i,j}$ or $P'_{i,j}$.

We start with bounding the number of vertices that are contained in $\mathrm{Cut}^r_s$ and some $P_{i,j}$ or $P'_{i,j}$. For any $(i,j) \in [\chi]^2$ we have $\vert \mathrm{Cut}^r_s \cap P_{i,j} \vert \leq 2$ as $P_{i,j}$ contains at most two distinct vertices that have the same distance from $s$. For the same reason we have $\vert \mathrm{Cut}^r_s \cap P'_{i,j} \vert \leq 2$. 
Hence \cref{lem:gadgets_in_cut} implies that the size of $\mathrm{Cut}^r_s \cap \{P_{i,j}, P'_{i,j} \mid (i,j) \in [\chi]^2\}$ is bounded by $\mathcal{O}(\chi)$.

Consider now some gadget $G_{i,j}$. We show that $\vert \mathrm{Cut}^r_s \cap G_{i,j} \vert \in \mathcal{O}(1)$. If $s$ is contained in $G_{i,j}$ this follows immediately from \cref{lem:skel_gadget}, as $\mathrm{Cut}^r_s$ intersects any path in $T^*_s$ at most twice. If $s$ is not contained in $G_{i,j}$, \cref{lem:skel_desc} implies that $\mathrm{Cut}^r_s \cap G_{i,j}$ is a subset of 
\begin{align*}
		\left\{ \mathrm{Cut}^{r(h)}_{x^h_{i,j}} \mid h \in [4] \text{ and } r(h) = r - \dist(s,x^h_{i,j}) \right\} \cap G_{i,j}.
\end{align*}
Observe that every $x^h_{i,j}$ is contained in $G_{i,j}$, which means that $\vert \mathrm{Cut}^{r(h)}_{x^h_{i,j}} \cap G_{i,j} \vert \in \mathcal{O}(1)$.
This means that the size of $\mathrm{Cut}^r_s \cap \{ G_{i,j} \mid (i,j) \in [\chi]^2 \}$ is bounded by $\mathcal{O}(\chi)$. Hence we have $\vert \mathrm{Cut}^r_s \vert \in \mathcal{O}(\chi)$ and it follows that $G_{\mathcal{I}}$ has skeleton dimension $\kappa \in \mathcal{O}(\chi)$.
\end{proof}

Finally we bound the pathwidth of the graph $G_{\mathcal{I}}$.

\begin{lemma}\label{lem:bound_treewidth}
The graph $G_{\mathcal{I}}$ has pathwidth $pw \in \mathcal{O}(\chi)$.
\end{lemma}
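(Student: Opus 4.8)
The plan is to exhibit an explicit path decomposition of $G_{\mathcal{I}}$ of width $\mathcal{O}(\chi)$. The natural skeleton of the decomposition follows the grid structure: recall that the gadgets $G_{i,j}$ are arranged in a $\chi\times\chi$ grid, with each $G_{i,j}$ connected to $G_{i+1,j}$ via the path $P_{i,j}$ (through $x^3_{i,j}$ and $x^1_{i+1,j}$) and to $G_{i,j+1}$ via the path $P'_{i,j}$ (through $x^2_{i,j}$ and $x^4_{i,j+1}$). First I would observe that each individual gadget $G_{i,j}$ has constant pathwidth: it consists of the cycle $O_{i,j}$ with a bounded number of attached structures (the two paths $U^1_{i,j}$ and $U^3_{i,j}$ hanging off via portal edges, the four pendant vertices $x^h_{i,j}$ with a single edge each, and the central vertex $y_{i,j}$ joined to the four $z^h_{i,j}$); a graph that is a cycle plus a constant number of trees attached has pathwidth $\mathcal{O}(1)$, and moreover one can build a path decomposition of $G_{i,j}$ in which $x^1_{i,j}$ lies in the first bag and $x^3_{i,j}$ lies in the last bag (and similarly arrange $x^2_{i,j}$, $x^4_{i,j}$). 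Second, I would process the grid column by column in the order $(1,1),(2,1),\dots,(\chi,1),(1,2),\dots$: concatenate the constant-width decompositions of the $G_{i,j}$ in this order, then splice in the decompositions of the connecting paths $P_{i,j}$ and $P'_{i,j}$ between the appropriate gadget-blocks.

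The one subtlety is that the paths $P'_{i,j}$ connect a gadget in column $j$ to a gadget in column $j+1$, i.e.\ they jump an entire column ahead in the linear order. To maintain property (iii) of a path decomposition, whenever $P'_{i,j}$ is opened (while processing column $j$) but not yet closed (its other endpoint $x^4_{i,j+1}$ lies in column $j+1$), the endpoint vertices $x^2_{i,j}$ and $x^4_{i,j+1}$, together with all the internal vertices $w_1,\dots,w_n$ of $P'_{i,j}$, would in the worst case have to be carried through all intervening bags. That is too expensive since $|P'_{i,j}|$ has $n$ vertices. The fix is the standard one: place the \emph{entire} path $P'_{i,j}$ inside the block of bags handling column $j$ (there is room, since within a column we are free to interleave), and only carry the single vertex $x^4_{i,j+1}$ forward into column $j+1$. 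Thus at any point in the decomposition we are carrying forward at most one ``pending'' vertex $x^4_{\cdot,j+1}$ per row, i.e.\ $\mathcal{O}(\chi)$ vertices in total, plus the $\mathcal{O}(1)$ vertices of the current gadget's local decomposition and the $\mathcal{O}(1)$ vertices of at most one connecting path currently being expanded. Hence every bag has size $\mathcal{O}(\chi)$.

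Concretely, I would define the decomposition so that a ``global'' set $R_j = \{\,x^4_{i,j} \mid i \in [\chi]\,\}$ of at most $\chi$ vertices is added to every bag while column $j$ is being processed; when we move from column $j-1$ to column $j$ we drop $R_{j-1}$ and introduce $R_j$ (this is legal because the vertices of $R_j$ each appear as the endpoint $x^4_{i,j}$ of exactly one path $P'_{i,j-1}$, whose decomposition sits entirely within column $j-1$, and also within the gadget $G_{i,j}$, whose decomposition sits within column $j$; so the occurrences of each such vertex form a contiguous interval). Then I check the three path-decomposition axioms: (i) every vertex of $G_{\mathcal{I}}$ appears (each gadget vertex in its local block, each $P_{i,j}$ or $P'_{i,j}$ internal vertex in the spliced-in block); (ii) every edge is covered (edges inside a gadget by its local decomposition; edges of $P_{i,j}$ and $P'_{i,j}$ by the spliced blocks; the single attaching edges at $x^h_{i,j}$ within the relevant bags); (iii) the interval property, which is the part requiring care and is handled by the contiguity argument above for the carried-forward vertices. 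The width is then $\max_i |X_i| - 1 = \mathcal{O}(\chi) + \mathcal{O}(1) - 1 = \mathcal{O}(\chi)$, as claimed. The main obstacle is purely bookkeeping: making sure the ``pending'' endpoints of the column-jumping paths $P'_{i,j}$ are carried correctly and that each vertex's bags form an interval; there is no real combinatorial difficulty once the column-by-column sweep with the $R_j$ buffer is set up.
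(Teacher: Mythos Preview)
Your overall strategy—sweeping linearly through the $\chi\times\chi$ grid of gadgets while carrying $\mathcal{O}(\chi)$ pending boundary vertices—is exactly the paper's approach; the paper sweeps row-wise and, more bluntly, adds all four $x^h_{i',j'}$ of the next $\chi$ gadgets in the linear order to every bag of the current gadget, which is the same idea with cruder bookkeeping.

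The one real gap is your justification that a single gadget $G_{i,j}$ has constant pathwidth. You describe it as ``a cycle plus a constant number of trees attached,'' but that is not the structure: each path $U^h_{i,j}$ is joined to the arc $O^h_{i,j}$ by $n$ portal edges $\{\rho^h_b,u^h_b\}$, one for every $b\in[n]$, so you have a cycle with two \emph{ladders} glued on, not pendant trees. A cycle together with an arbitrary matching to an external path need not have bounded pathwidth; what makes it work here is that the portals $\rho^h_1,\dots,\rho^h_n$ appear along $O^h_{i,j}$ in the same order as $u^h_1,\dots,u^h_n$ appear along $U^h_{i,j}$ (because $d_{(a,b)}$ is monotone in $b$). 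The paper exploits exactly this: after contracting degree-$2$ vertices it takes a width-$2$ path decomposition of the contracted cycle and adds $u^h_b$ to every bag containing $\rho^h_b$; since consecutive portals then share a bag, each edge $\{u^h_b,u^h_{b+1}\}$ is covered and each $u^h_b$ occupies a contiguous interval of bags. You need this monotonicity observation---``constant number of attached trees'' does not supply it. (A smaller slip: $x^4_{i,j+1}$ is introduced while decomposing $P'_{i,j}$ inside column $j$, but your buffer $R_{j+1}$ is only placed in the bags of column $j{+}1$, so contiguity fails through the tail of column $j$; carrying both $R_j$ and $R_{j+1}$ through column $j$, or simply adopting the paper's blanket rule of adding the $x$-vertices of the next $\chi$ gadgets, fixes this at no asymptotic cost.)
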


\begin{proof}

Consider the graph $\hat G_{\mathcal{I}}$ that arises when we contract all vertices of degree $2$ except the vertices $x^h_{i,j}$. It suffices to show that $\hat G_{\mathcal{I}}$ has pathwidth at most $\mathcal{O}(\chi)$.  
For $(i,j) \in [\chi]^2$ denote the gadget $G_{i,j}$ and the cycle $O_{i,j}$ after the contraction by $\hat G_{i,j}$ and $\hat O_{i,j}$, respectively.
We first construct a path decomposition of constant width for every $\hat G_{i,j}$. To this end, consider the cycle $\hat O_{i,j}$, which (as every cycle) has a path decomposition where every bag has size at most $3$. For $h \in \{1,3\}$ and $b \in [n]$, add $u^h_b$ to every bag containing the portal $\rho^h_b$. Finally, add $y_{i,j}$ and $x^1_{i,j}, \dots, x^4_{i,j}$ to every bag. This yields a path decomposition of $\hat G_{i,j}$ which has constant width.

We now combine the path decompositions of the gadgets $\hat G_{i,j}$ to a path decomposition of $\hat G_{\mathcal{I}}$. For $(i,j) \in [\chi]^2$, consider the path decomposition of $\hat G_{i,j}$ and add the vertices $\{ x^1_{i',j'}, \dots x^4_{i',j'} \mid 1 \leq (i'-i) \cdot \chi + (j'-j) \leq \chi \}$ to every bag. According to \cref{fig:reduction}, these are the vertices $x^h_{i',j'}$ of the $\chi$ gadgets after $\hat G_{i,j}$ when considering the gadgets row-wise from left to right. Denote the resulting path decomposition by $\mathcal P_{(i-1) \cdot \chi + j}$. We can observe, that its width is bounded $\mathcal{O}(1) + 4 \chi$. Concatenating all these path decompositions as $\mathcal P_1 \mathcal P_2 \dots \mathcal P_{\chi^2}$ then yields a path decomposition of $\hat G_{\mathcal{I}}$ of width $\mathcal{O}(1) + 4 \chi$, which concludes the proof.
\end{proof}

\section{Conclusion}
The properties shown in the previous section now imply \cref{thm:main_result}. As the $\prob{GT}_\leq$ problem is $W[1]$-hard for parameter $\chi$ and we have $k = 5 \chi^2, hd \in \mathcal{O}(\chi^2), \kappa \in \mathcal{O}(\chi)$ and $pw \in \mathcal{O}(\chi)$, it follows that on planar graphs of constant doubling dimension, \prob{$k$-Center} is $W[1]$-hard for parameter $(k,pw,hd,\kappa)$. Assuming ETH there is no $f(\chi) \cdot n^{o(\chi)}$ time algorithm for $\prob{GT}_\leq$ and hence, \prob{$k$-Center} has no $f(k,hd,pw,\kappa) \cdot \vert V \vert^{o(pw + \kappa + \sqrt{k+h})}$ time algorithm unless ETH fails.

It follows that on planar graphs of constant doubling dimension, \prob{$k$-Center} has no fixed-parameter algorithm for parameter $(k,pw,hd,\kappa)$ unless FPT=W[1].
Moreover, it was shown that \prob{$k$-Center} has no efficient $(2-\epsilon)$-approximation algorithm for graphs of highway dimension $hd \in \mathcal{O}(\log^2 \vert V \vert)$~\cite{DBLP:journals/algorithmica/Feldmann19} or skeleton dimension $\kappa \in \mathcal{O}(\log^2 \vert V \vert)$~\cite{DBLP:conf/iwpec/000119}.

Still, combining the paradigms of approximation and fixed-parameter algorithms allows one to compute a $(2-\epsilon)$ approximation for \prob{$k$-Center} on transportation networks. For instance, there is a $3/2$-approximation algorithm that has runtime $2^{\mathcal{O}(k \cdot hd \log hd)} \cdot n^{\mathcal{O}(1)}$ for highway dimension $hd$~\cite{DBLP:journals/algorithmica/Feldmann19} and a $(1 + \epsilon)$-approximation algorithm with runtime $(k^k/\epsilon^{\mathcal{O}(k \cdot d)}) \cdot n^{\mathcal{O}(1)}$ for doubling dimension $d$~\cite{Fel20}. As the doubling dimension is bounded by $\mathcal{O}(\kappa)$, the latter result implies a $(1 + \epsilon)$-approximation algorithm that has runtime $(k^k/\epsilon^{\mathcal{O}(k \cdot \kappa)}) \cdot n^{\mathcal{O}(1)}$.

On the negative side, there is no $(2 - \epsilon)$-approximation algorithm with runtime $f(k) \cdot n^{\mathcal{O}(1)}$ for any $\epsilon > 0$ and computable $f$ unless W[2]=FPT~\cite{Fel20}. It remains open, to what extent the previously mentioned algorithms can be improved.

\bibliographystyle{plain}
\bibliography{bibliography}

\end{document}